\tikzset{every state/.style={font=\normalsize, inner sep=1pt, minimum size=16pt}}
\tikzset{tr/.style={->, shorten >=1pt, font=\footnotesize, inner sep=3pt, text depth=0pt, anchor=base, auto}}
\tikzset{initial text=, every initial by arrow/.style={shorten >=1pt}}
\tikzset{automaton/.style={x=1.5cm, y=1.5cm}}
\tikzset{poset/.style={x=1.33cm, y=1.0cm, font=\footnotesize, inner sep=1pt}}
\tikzset{orbit/.style={fill=black!15, rounded corners=4pt, inner sep=3pt}}
\renewcommand*{\id}{\mathsf{id}}
\newcommand*{\up}[1]{\text{\upshape #1}}  
\newcommand*{\mikobrace}[2]{\underbrace{#1}_{\mathclap{\text{\scriptsize #2}}}}
\newcommand*{\aut}{\mathcal{A}}
\newcommand*{\lang}{\mathcal{L}}
\newcommand*{\stlang}[1]{\llbracket #1 \rrbracket}
\newcommand*{\Anchor}{\mathsf{Anc}}
\newcommand*{\anc}[1]{#1^{\mathsf{anc}}}
\newcommand*{\tp}[1]{#1^{\top}}
\newcommand*{\hl}[1]{{\color{red}#1}}
\newcommand*{\atoms}{\mathbb{A}}
\DeclareMathOperator{\Perm}{\mathsf{Perm}}
\DeclareMathOperator{\supp}{\mathsf{supp}}
\DeclareMathOperator{\orb}{\mathsf{orb}}
\DeclareMathOperator{\Orb}{\mathsf{Orb}}
\newcommand*{\Aorb}[1]{{#1}\text{-}\!\orb}
\DeclareMathOperator{\Pow}{\mathcal{P}}
\DeclareMathOperator{\Powfs}{\mathcal{P}_{\mathsf{fs}}}
\DeclareMathOperator{\Powufs}{\mathcal{P}_{\mathsf{ufs}}}
\DeclareMathOperator{\Powfin}{\mathcal{P}_{\mathsf{fin}}}
\newcommand*{\clos}[1]{\left\langle #1 \right\rangle}
\DeclareMathOperator{\Res}{\mathsf{Der}}
\DeclareMathOperator{\JI}{\mathsf{JI}}
\DeclareMathOperator{\JIufs}{\mathsf{JI}_{\mathsf{ufs}}}
\newcommand*{\lstar}{\ensuremath{\mathsf{L^\star}}}
\newcommand*{\nlstar}{\ensuremath{\mathsf{NL^\star}}}
\newcommand*{\nomlstar}{\ensuremath{\nu\mathsf{L^\star}}}
\newcommand*{\nomnlstar}{\ensuremath{\nu\mathsf{NL^\star}}}
\newcommand*{\alstar}{\ensuremath{\mathsf{AL^\star}}}
\newcommand*{\ot}{T}
\newcommand*{\row}{\textsf{row}}
\newcommand*{\Rows}{\textsf{Rows}}
\newcommand*{\RowsUpp}{\Rows^{\uparrow}}
\newenvironment{definition}{\defi}{\enddefi}
\newenvironment{theorem}{\thm}{\endthm}
\newenvironment{proposition}{\prop}{\endprop}
\newenvironment{lemma}{\lem}{\endlem}
\newenvironment{corollary}{\cor}{\endcor}
\newenvironment{remark}{\rem}{\endrem}
\newenvironment{example}{\exa}{\endexa}
\theoremstyle{definition}\newtheorem{construction}[thm]{Construction}
\theoremstyle{plain}\newtheorem{claim}[thm]{Claim}
\newcommand{\ch}[1]{#1}
\keywords{nominal automata, residual automata, derivative language, decidability, closure, exact learning, lattice theory}
\begin{document}

\title[Residuality and Learning for Nondeterministic \ch{Nominal} Automata]{Residuality and Learning for\texorpdfstring{\\}{} Nondeterministic Nominal Automata{\rsuper*}}
\titlecomment{{\lsuper*}This is an extended version of the conference paper~\cite{MoermanS20}. This research has been partially funded by the ERC AdG project 787914 FRAPPANT and EPSRC Standard Grant CLeVer (EP/S028641/1).}

\author[J.~Moerman]{Joshua Moerman\rsuper{a}}
\address{RTWH Aachen University, Germany}
\email{joshua@cs.rwth-aachen.de}

\author[M.~Sammartino]{Matteo Sammartino\rsuper{b}}
\address{Royal Holloway University of London, United Kingdom}
\email{matteo.sammartino@rhul.ac.uk}

\begin{abstract}
We are motivated by the following question: which data languages admit an active learning algorithm?
This question was left open in previous work \ch{by the authors}, and is particularly challenging for languages recognised by nondeterministic automata.
To answer it, we develop the theory of \emph{residual \ch{nominal} automata}, a subclass of nondeterministic \ch{nominal} automata.
We prove that this class has canonical representatives, which can always be constructed via a finite number of observations.
This property enables active learning algorithms, and makes up for the fact that residuality --- a semantic property --- is undecidable for \ch{nominal} automata.
Our construction for canonical residual automata is based on a machine-independent characterisation of residual languages,
for which we develop new results in nominal lattice theory.
Studying residuality in the context of nominal languages is a step towards a better understanding of learnability of automata with some sort of nondeterminism.
\end{abstract}

\maketitle

\section{Introduction}

Formal languages over infinite alphabets have received considerable attention recently.
They include data languages for reasoning about XML databases~\cite{NevenSV04}, trace languages for analysis of programs with resource allocation~\cite{GrigoreDPT13}, and behaviour of programs with data flows~\cite{HowarJV19}.
Typically, these languages are accepted by \emph{register automata}, first introduced in the seminal paper~\cite{KaminskiF94}. \ch{Another appealing model is that of \emph{nominal automata}~\cite{BojanczykKL14}.
While nominal automata are as expressive as register automata,
they enjoy convenient properties.
For example, the deterministic ones admit canonical minimal models, and the theory of formal languages and many textbook algorithms generalise smoothly.}

In this paper, we investigate the properties of so-called \emph{residual} \ch{nominal} automata.
An automaton accepting a language $\lang$ is residual whenever the language of each state is a \emph{derivative} of $\lang$.
In the context of regular languages over finite alphabets, residual finite state automata (RFSAs) are a subclass of nondeterministic finite automata (NFAs) introduced by Denis et al.~\cite{DenisLT02} as a solution to the well-known problem of NFAs \emph{not having} unique minimal representatives.
They show that every regular language admits a unique canonical RFSA, which can be much smaller than the canonical deterministic automaton.

Residual automata play a key role in the context of \emph{exact learning}\footnote{Exact learning is also known as query learning or active (automata) learning~\cite{Angluin87}.}, in which one computes an automaton representation of an unknown language via a finite number of observations.
The defining property of residual automata allows one to (eventually) observe the semantics of each state independently.
In the finite-alphabet setting, residuality underlies the seminal algorithm \lstar{} for learning deterministic automata~\cite{Angluin87} (deterministic automata are always residual), and enables efficient algorithms for learning nondeterministic~\cite{BolligHKL09} and alternating automata~\cite{AngluinEF15, BerndtLLR17}. Residuality has also been studied for learning probabilistic automata~\cite{DenisE08}.
Existence of canonical residual automata is crucial for the convergence of these algorithms.

Our investigation of residuality in the context of data languages is motivated by the question: which data languages admit an exact learning algorithm?
In previous work~\cite{MoermanS0KS17}, we have shown that the \lstar{} algorithm generalises smoothly to data languages, meaning that deterministic \ch{nominal} automata can be learned.
However, the nondeterministic case proved to be significantly more challenging.
In fact, in stark contrast with the finite-alphabet case, nondeterministic \ch{nominal} automata are \emph{strictly more expressive} than deterministic ones, so that residual automata are not just succinct representations of deterministic languages.
As a consequence, our attempt to \ch{generalise the \nlstar{} algorithm for learning nondeterministic finite automata~\cite{BolligHKL09}} to \ch{nominal} automata only partially succeeded: we only proved that it converges for deterministic languages, leaving the nondeterministic case open.
By investigating residual data languages, we are finally able to settle this case.

\begin{figure}[t]
\centering
\begin{tikzpicture}[x=1.5cm, y=1.0cm]
\node (dfa)   at (0, 0)  {\textsc{Deterministic}};
\node (rfsa-) at (0, 1)  {\textsc{Residual}$^{-}$};
\node (rfsa)  at (-1, 2) {\textsc{Residual}};
\node (nfa-)  at (1, 2)  {\textsc{Nondeterministic}$^{-}$};
\node (nfa)   at (0, 3)  {\textsc{Nondeterministic}};
\path[-]
  (dfa)   edge (rfsa-)
  (rfsa-) edge (rfsa)
  (rfsa-) edge (nfa-)
  (rfsa)  edge (nfa)
  (nfa-)  edge (nfa);
\end{tikzpicture}
\caption{Relationship between classes of data languages. Edges are strict inclusions. With~${\cdot}^{-}$ we denote classes where automata are not allowed to \emph{guess} values, i.e., to store symbols in registers without explicitly reading them.}
\label{fig:lang-diag}
\end{figure}

In summary, our contributions are as follows:
\begin{itemize}
\item Section~\ref{sec:examples}: We refine classes of data languages as depicted in Figure~\ref{fig:lang-diag}, by giving separating languages for each class.
\item Section~\ref{sec:crna}: We develop new results of nominal lattice theory, from which we prove the main characterisation theorem (Theorem~\ref{thm:residual-characterisation}).
This provides a machine-independent characterisation of the languages accepted by residual nominal automata and constructs canonical automata which:
a) are minimal in their respective class and unique up to isomorphism; b) can be constructed via a finite number of observations of the language.
We also give an analogous result for non-guessing automata (Theorem~\ref{thm:non-guessing-residuals}).
\item Section~\ref{sec:dec}: We study decidability and closure properties for residual nominal automata.
We prove that, like for nondeterministic nominal automata, equivalence is undecidable.
On the other hand, universality is decidable.
\item Section~\ref{sec:learning}: We settle important open questions about exact learning of data languages. We show that residuality does not imply convergence of existing algorithms, and we give a (modified) \nlstar{}-style algorithm that works precisely for residual languages.
\end{itemize}

This research mirrors that of \emph{residual probabilistic automata} \cite{DenisE08}.
There, too, one has distinct classes of which the deterministic and residual ones admit canonical automata and have an algebraic characterisation.
We believe that our results contribute to a better understanding of learnability of automata with some sort of nondeterminism.

\subsection{Differences with conference version.}
This paper is the extended version of~\cite{MoermanS20}, published at CONCUR'20. Since then we have added:
\begin{itemize}
	\item full proofs for all the results;
	\item a new result stating that equivalence is undecidable;
	\item a greatly expanded Section~\ref{sec:learning}, with the learning algorithm, proofs and bounds;
	\item more elaborate discussion in which we consider alternating and unambiguous nominal automata, as well as other symmetries.
\end{itemize}

\section{Preliminaries}
\label{sec:prelim}

\subsection{Nominal Sets}

Our paper is based on the theory of nominal sets of~\cite{Pitts13}, whose basic notions we now briefly recall.

Let $\atoms = \{ a, b, c, \ldots \}$ be a countably infinite set of \emph{atoms} and let $\Perm(\atoms)$ be the set of \emph{\ch{finite} permutations on $\atoms$}, i.e., the bijective functions $\pi \colon \atoms \to \atoms$ \ch{such that the set $\{a \in \atoms \mid \pi(a) \neq a\}$ is finite}.
\ch{Finite} permutations form a group where the unit is given by the identity function, the inverse by functional inverse, and multiplication by function composition.

A function  ${\cdot} \colon \Perm(\atoms) \times X \to X$ is a \emph{group action} if it satisfies $\pi \cdot ( \pi' \cdot x) = ( \pi \circ \pi') \cdot x$ and $\id \cdot x = x$.
We often omit the ${\cdot}$ and write $\pi x$ instead.
We say that a set of atoms $A \subset \atoms$ \emph{supports} $x \in X$ whenever $\pi x = x$ for all the permutations $\pi$ that fix $A$ pointwise (i.e., $\pi(a) = a$ for all $a \in A$).
A \emph{nominal set} is a set $X$ together with a group action such that each $x \in X$ has a \emph{finite} support (that is, each $x$ is \emph{finitely supported}).
Every element of a nominal set $x \in X$ has a \ch{\emph{least} finite support} which we denote by $\supp(x)$.

Given a nominal set $X$, the group action extends to subsets of $X$.
Let $U \subseteq X$ be a subset, then we define the group action as $\pi \cdot U \coloneqq \{ \pi x \mid x \in U \}$.
A subset $U$ that is supported by the empty set is called \emph{equivariant} and for such $U$ we have $\pi U = U$ for all permutations $\pi$.
This definition extends to relations and functions.
For instance, a function $f \colon X \to Y$ between nominal sets is equivariant whenever $\pi f(x) = f(\pi x)$.

Two elements of a nominal set $x, y \in X$ can be considered equivalent if there is a permutation $\pi$ such that $\pi x = y$.
This defines an equivalence relation and partitions the set into classes.
Such an equivalence class is called an \emph{orbit}.
Concretely, the orbit of $x$ is given by $\orb(x) \coloneqq \left\{ \pi x \mid \pi \in \Perm(\atoms) \right\}$.
A nominal set $X$ is \emph{orbit-finite} whenever it is a finite union of orbits.
More generally, we consider \emph{$A$-orbits}, those are orbits defined over permutations fixing a finite set $A \subset \atoms$, namely $\Aorb{A}(x) \coloneqq \left\{ \pi x \mid \pi \in \Perm(\atoms),\, \forall a \in A \colon \pi(a) = a \right\}$.
We note that ($A$-)orbit-finite sets are ``robust'', in the sense that results assuming equivariance (i.e., $A = \emptyset$) lift to an arbitrary $A$ without much effort (see e.g.,~\cite[Chapter 5]{Pitts13}).
Orbit-finite sets are finitely-representable, hence algorithmically tractable~\cite{BojanczykBKL12}.

\ch{
\begin{example}
Consider the set $D = \{(a,b) \mid a,b \in \atoms\}$, together with the point-wise group action $\pi(a,b) = (\pi(a),\pi(b))$. This forms a nominal set where $\supp(a,b) = \{a,b\}$, for each $(a,b) \in D$. The set $D$ has two orbits:
\[
	\{(a,b) \mid a,b \in A, a\neq b\} \quad\text{and}\quad \{(a,a) \mid a \in \atoms\} \enspace ,
\]
because permutations can never map $(a,b)$ to $(a,a)$ or $(b,b)$ due to bijectivity. It has five $\{a\}$-orbits:
\begin{gather*}
	\{(a,b) \mid b \in \atoms \setminus \{a\} \} \qquad \{(b,a) \mid b \in \atoms \setminus \{a\} \} \qquad \{(b,c) \mid b,c \in \atoms \setminus \{a\}, b \neq c \} \\ 
	\{(a,a)\} \qquad \{(b,b) \mid b \in \atoms \setminus \{a\} \} \enspace .
\end{gather*}
Note that fixing $a$ in permutations has the effect of partitioning each regular orbit into several $\{a\}$-orbits.
\end{example}
}

\subsection{Nominal Power Set}

Given a nominal set $X$, the \emph{nominal power set} is defined as
\[ \Powfs(X) \coloneqq \left\{ U \subseteq X \mid U \text{ is finitely supported} \right\} . \]
Equivalently, one could define $\Powfs(X)$ as the nominal set of all finitely supported functions of type $X \to 2$.
This set is a Boolean algebra of which all the operations (union, intersection, and complement) are equivariant maps~\cite{GabbayLP11}.

We will also use the \emph{finite power set} defined as
\[ \Powfin(X) \coloneqq \left\{ U \subseteq X \mid U \text{ is finite} \right\} . \]
This does not form a Boolean algebra, as it does not have a top element.
However, it is still a join-semilattice.
In Section~\ref{sec:crna} we develop more of the theory of nominal lattices.

\begin{example}
The set $\Powfs(\atoms)$ contains all finite subsets of $\atoms$ as well as all co-finite subsets of $\atoms$.
The support of a finite set $A \subset \atoms$ is simply $A$ itself, and the support of a co-finite set $B$ is given by $\atoms \setminus B$.
On the other hand $\Powfin(\atoms)$ only contains the finite subsets.
We notice that $\Powfin(\atoms) \subsetneq \Powfs(\atoms) \subsetneq \Pow(\atoms)$.
\end{example}

Note that neither power sets preserve orbit-finite sets.
This fact often hinders generalising certain algorithms, such as the subset construction on nondeterministic automata, to the nominal setting.

\subsection{Other Symmetries}\label{sec:symmetries}
So far we have introduced nominal sets based on pure atoms $(\atoms, {=})$, which only allow data values to be compared for equality.
The theory of orbit-finite sets can be parametrised over other data theories.
Another structure often considered is the dense linear order $(\mathbb{Q}, \leq)$, which allow data values to be compared for order.
We refer to~\cite{BojanczykKL14} for other structures with applications to \ch{nominal} automata.

In this paper we restrict to the pure atoms, because of the following useful property, used in Lemma~\ref{lem:join-of-ji}.

\begin{lemma}
Given a poset $(P, \leq)$ where $P$ is a nominal set and $\leq$ is equivariant,
\begin{enumerate}
\item if $x \leq y$ and $x$ and $y$ are in the same orbit, then $x = y$;
\item if $P$ is orbit-finite and non-empty, then it has a minimal element (i.e., some $m \in P$ such that $n \leq m$ implies $n = m$).
\end{enumerate}
\label{lem:minimal-in-poset}
\end{lemma}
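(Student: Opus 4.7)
For part (1), my plan is to exploit equivariance by iterating the permutation witnessing that $x$ and $y$ lie in the same orbit. Write $y = \pi x$ for some $\pi \in \Perm(\atoms)$. Applying $\pi^i$ to $x \leq y$ and using equivariance of $\leq$ yields an ascending chain $x \leq \pi x \leq \pi^2 x \leq \cdots$. If I can arrange that $\pi^n x = x$ for some $n \geq 1$, antisymmetry collapses the chain and gives $x = \pi x = y$.

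The pure-atoms hypothesis enters precisely when showing that $\pi$ may be taken of finite order. I will first replace $\pi$ by a finitely supported permutation $\pi'$ that agrees with $\pi$ on $\supp(x)$ (easy: extend $\pi|_{\supp(x)}$ to a bijection of $\supp(x) \cup \pi(\supp(x))$ and take the identity elsewhere). Then $\pi^{-1}\pi'$ fixes $\supp(x)$ pointwise, hence stabilises $x$, so $\pi' x = \pi x = y$. Any finitely supported permutation embeds into the symmetric group on its finite support, which is finite, so $(\pi')^n = \id$ for some $n \geq 1$, closing the chain.

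For part (2), I reduce to a finite problem using part (1). Partition $P$ into its finitely many orbits $O_1, \ldots, O_k$ and define $O_i \preceq O_j$ iff there exist $x \in O_i$ and $y \in O_j$ with $x \leq y$. Transitivity follows by translating witnesses through a permutation so the two $O_j$-witnesses coincide. For antisymmetry, if $O_i \preceq O_j \preceq O_i$ with $i \neq j$, I line up witnesses to produce $x \leq y \leq x'$ with $x, x' \in O_i$ and $y \in O_j$; part (1) applied to $x \leq x'$ forces $x = x'$, and then antisymmetry of $\leq$ gives $y = x \in O_i \cap O_j$, contradicting disjointness of orbits.

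Thus $\preceq$ is a partial order on a finite non-empty set and admits a minimal orbit $O_*$. Any $m \in O_*$ is then minimal in $P$: if $z \leq m$ then $\orb(z) \preceq O_*$, minimality forces $\orb(z) = O_*$, and (1) gives $z = m$. I expect the main obstacle to be part (1): equivariance alone places no a priori bound on ascending chains inside a single orbit, and the argument genuinely relies on the pure-atoms fact that finitely supported permutations generate finite subgroups — exactly the feature that fails for the richer symmetries (such as the dense linear order) mentioned in Section~\ref{sec:symmetries}.
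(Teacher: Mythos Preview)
Your proposal is correct and follows essentially the same approach as the paper: iterate a finite-order permutation to collapse the chain in part~(1), then pass to the induced partial order on the finite set of orbits in part~(2). The paper simply asserts ``there is a finite permutation $\pi$ such that $y = \pi x$'' where you spell out the replacement of $\pi$ by a finitely supported $\pi'$; likewise, you make explicit the final step that an element of the minimal orbit is minimal in $P$, which the paper leaves implicit. These are elaborations, not differences of method.
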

\begin{proof}
Ad (1), let $x, y$ in the same orbit.
There is a finite permutation $\pi$ such that $y = \pi x$, which gives $x \leq y = \pi x$.
By equivariance of $\leq$ we get $\pi x \leq \pi y$, which gives
\[ x \leq \pi x \leq \pi^2 x \leq \pi^3 x \leq \cdots \]
Since $\pi$ is a finite permutation we have $\pi^k = \id$ for some $k$ and so $x \leq y \leq \cdots \leq x$, proving $x = y$.

\newcommand*{\localleq}{\preceq}
Ad (2), consider the finite set of orbits $\Orb(P) \coloneqq \{ \orb(x) \mid x \in P \}$ and the induced relation \ch{${\localleq} \subseteq \Orb(P) \times \Orb(P)$} defined by $o_1 \localleq o_2$ if there are $x \in o_1$ and $y \in o_2$ such that $x \leq y$.
The relation $\localleq$ is clearly reflexive. For transitivity, consider $o_1 \localleq o_2 \localleq o_3$ with witnesses $x \leq y$ and $y' \leq z$;
there is a permutation $\pi$ such that $y = \pi y'$, giving $x \leq y = \pi y' \leq \pi z$, which shows $o_1 \leq o_3$.
For antisymmetry, consider $o_1 \localleq o_2$ and $o_2 \localleq o_1$ with witnesses $x \leq y$ and $y' \leq x'$.
Again there is a permutation $\pi$ such that $y = \pi y'$ which gives $x \leq y = \pi y' \leq \pi x'$.
By (1) we have $\pi x' = x$ and hence $x = y$ as required.
We have shown that $(\Orb(P), {\localleq})$ is a finite poset, therefore it has a minimal orbit $o$ in $\Orb(P)$, and any element of $o$ is minimal in $P$.
\end{proof}

The above property does not hold for the ordered atoms $(\mathbb{Q}, \leq)$.
To see this, consider the nominal poset $\mathbb{Q}$ itself (which is a single orbit set) with the given order.
This poset has no minimal element.

\subsection{Nominal Automata}

The theory of nominal automata seamlessly extends classical automata theory by having orbit-finite nominal sets and equivariant functions in place of finite sets and functions.

\begin{definition}
A \emph{(nondeterministic) nominal automaton} $\aut$ consists of an orbit-finite nominal set $\Sigma$, the alphabet, an orbit-finite nominal set of states $Q$, and the following equivariant subsets
\[ \mikobrace{I \subseteq Q}{initial states} \qquad \mikobrace{F \subseteq Q}{final states} \qquad \mikobrace{\delta \subseteq Q \times \Sigma \times Q}{transitions} . \]
\end{definition}

The usual notions of acceptance and language apply.
We denote the language accepted by $\aut$ by $\lang(\aut)$, and the language accepted by a state $q$ by $\stlang{q}$.
Note that the language $\lang(\aut) \in \Powfs(\Sigma^*)$ is equivariant, and that $\stlang{q} \in \Powfs(\Sigma^*)$ need not be equivariant, but it is supported by $\supp(q)$.
\begin{remark}
In most examples we take the alphabet to be $\Sigma = \atoms$, but it can be any orbit-finite nominal set. For instance, $\Sigma = \mathsf{Act} \times \atoms$, where $\mathsf{Act}$ is a finite set of actions, represents actions $\mathsf{act}(x)$ with one parameter $x \in \atoms$ (actions with arity $n$ can be represented via $n$-fold products of $\atoms$).
\end{remark}

We recall the notion of \emph{derivative language}~\cite{DenisLT02}.\footnote{This is sometimes called a \emph{residual language} or \emph{left quotient}. We do not use the term residual language here, because residual language will mean a language accepted by a residual automaton.}
\begin{definition}
Given a language $\lang$ and a word $u \in \Sigma^*$,
we define the \emph{derivative of $\lang$ with respect to $u$} as
\[ u^{-1} \lang \coloneqq \left\{ w \mid uw \in \lang \right\} \]
and the set of all derivatives as
\[ \Res(\lang) \coloneqq \left\{ u^{-1} \lang \mid u \in \Sigma^* \right\} . \]
\end{definition}
These definitions seamlessly extend to the nominal setting. Note that $w^{-1}\lang$ is finitely supported whenever $\lang$ is.

Of special interest are the deterministic, residual, and non-guessing nominal automata, which we introduce next.
\begin{definition}
A nominal automaton $\aut$ is:
\begin{itemize}
\item
\emph{Deterministic} if $I = \{q_0\}$, and for each $q \in Q$ and $a \in \Sigma$ there is a unique $q'$ such that $(q, a, q') \in \delta$.
In this case, the relation is in fact functional $\delta \colon Q \times \Sigma \to Q$.
\item
\emph{Residual} if each state $q \in Q$ accepts a derivative of $\lang(\aut)$, formally: $\stlang{q} = w^{-1}\lang(\aut)$ for some word $w \in \Sigma^*$.
The words $w$ such that $\stlang{q} = w^{-1}\lang(\aut)$ are called \emph{characterising words} for the state $q$.
\item
\emph{Non-guessing} if $\supp(q_0) = \emptyset$, for each $q_0 \in I$, and $\supp(q') \subseteq \supp(q) \cup \supp(a)$, for each $(q, a, q') \in \delta$.
\end{itemize}
\label{def:nom-aut}
\end{definition}
Observe that the transition function of a deterministic automaton preserves supports (i.e., if $C$ supports $(q, a)$ then $C$ also supports $\delta(q, a)$).
Consequently, all deterministic automata are non-guessing. For the sake of succinctness, in the following we drop the qualifier ``nominal'' when referring to these classes of nominal automata.

For many examples, it is useful to define the notion of an anchor.
Given a state $q$, a word $w$ is an \emph{anchor} if $\delta(I, w) = \{ q \}$, that is, the word $w$ leads to $q$ and no other state.
Every anchor for $q$ is also a characterising word for $q$ (but not vice versa).
A state with an anchor is called \emph{anchored}, and we call an automaton \emph{anchored} if all states have anchors.

Finally, we recall the Myhill-Nerode theorem for nominal automata.
\begin{theorem}[{\cite[Theorem~5.2]{BojanczykKL14}}]
\label{thm:my-ne}
Let $\lang$ be a language.
Then $\lang$ is accepted by a deterministic automaton if and only if $\Res(\lang)$ is orbit-finite.
\end{theorem}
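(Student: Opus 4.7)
The plan is to mimic the classical Myhill-Nerode proof, replacing ``finite set'' with ``orbit-finite nominal set'' and ``function'' with ``equivariant function''. There are two directions.

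For the forward direction, I would start from a deterministic nominal automaton $\aut = (Q, \Sigma, \delta, q_0, F)$ with $\lang(\aut) = \lang$. The key observation is that for any word $u \in \Sigma^*$, the derivative satisfies $u^{-1}\lang = \lang(\delta(q_0, u))$, which is a straightforward induction on $u$ using determinism. Hence $\Res(\lang)$ is the image of $Q$ under the map $q \mapsto \lang(q)$. This map is equivariant (since $\delta$, $I$, $F$ are equivariant), and equivariant maps send orbits to orbits: if $\pi$ is a permutation, then $\pi \cdot \lang(q) = \lang(\pi q)$, so $\orb(q)$ maps onto $\orb(\lang(q))$. Since $Q$ is a finite union of orbits, so is its image $\Res(\lang)$.

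For the reverse direction, I would construct the canonical ``Myhill-Nerode automaton'' whose states are the derivatives themselves:
\begin{align*}
Q &\coloneqq \Res(\lang), \quad I \coloneqq \{\lang\}, \quad F \coloneqq \{L \in \Res(\lang) \mid \varepsilon \in L\}, \\
\delta(u^{-1}\lang, a) &\coloneqq (ua)^{-1}\lang = a^{-1}(u^{-1}\lang).
\end{align*}
The transition $\delta$ is well-defined because the right-hand side depends only on the language $u^{-1}\lang$, not on the representative $u$: if $u^{-1}\lang = v^{-1}\lang$ then $a^{-1}(u^{-1}\lang) = a^{-1}(v^{-1}\lang)$. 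Orbit-finiteness of $Q$ is the hypothesis. Equivariance of $\delta$, $I$, $F$ follows from equivariance of $\lang$: for any permutation $\pi$, we have $\pi \cdot (u^{-1}\lang) = (\pi u)^{-1}(\pi \lang) = (\pi u)^{-1}\lang$, and similarly for finality (since permutations fix $\varepsilon$). A routine induction shows $\lang(u^{-1}\lang) = u^{-1}\lang$ in this automaton, and taking $u = \varepsilon$ gives $\lang(\aut) = \lang$.

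There is no single hard step here; the whole content is checking that the classical argument transports cleanly through the nominal setting. The only delicate point is ensuring everything remains finitely supported and equivariant: if $\lang$ is not fully equivariant but only supported by some finite $C \subset \atoms$, one runs the same construction with $C$-equivariance throughout, appealing to the robustness of orbit-finiteness under fixing a finite support as noted after the definition of $A$-orbits. Apart from that, the proof is a direct translation of the standard argument: the derivative-of-language map factors deterministic automata through $\Res(\lang)$, so orbit-finiteness of $\Res(\lang)$ is both necessary (it is the image of $Q$) and sufficient (it serves as $Q$).
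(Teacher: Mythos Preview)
The paper does not prove this theorem at all; it is quoted verbatim from \cite[Theorem~5.2]{BojanczykKL14} and used as a black box, so there is no ``paper's own proof'' to compare against. Your argument is the standard Myhill--Nerode construction transported to nominal sets, and it is correct in outline; this is essentially what one finds in the cited reference.

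One small imprecision: in the forward direction you write that $\Res(\lang)$ is the image of $Q$ under $q \mapsto \lang(q)$, but strictly it is the image of the \emph{reachable} states only. This does not break the argument, since the reachable part of $Q$ is an equivariant subset of $Q$ (as $q_0$ and $\delta$ are equivariant), hence itself orbit-finite, and its image under the equivariant map $\lang(-)$ is then orbit-finite. Alternatively, $\Res(\lang)$ is an equivariant subset of $\{\lang(q) \mid q \in Q\}$, and an equivariant subset of an orbit-finite set is orbit-finite because it is a union of some of the finitely many orbits.
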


\section{Separating languages}\label{sec:examples}

Deterministic, nondeterministic and residual automata have the same expressive power when dealing with finite alphabets.
The situation is more nuanced in the nominal setting.
We now give one language for each class in Figure~\ref{fig:lang-diag}.
For simplicity, we mostly use the one-orbit nominal set of atoms $\atoms$ as alphabet.
These languages separate the different classes, meaning that they belong to the respective class, but not to the classes below or beside it.

For each example language $\lang$, we depict: a nominal automaton recognising $\lang$ (on the left); the set of derivatives $\Res(\lang)$ (on the right). We make explicit the poset structure of $\Res(\lang)$: grey rectangles represent orbits of derivatives, and lines stand for set inclusions (we grey out irrelevant ones). This poset may not be orbit-finite, in which case we depict a small, indicative part. Observing the poset structure of $\Res(\lang)$ explicitly is important for later, where we show that the existence of residual automata depends on it.
We write $aa^{-1}\lang$ to mean $(aa)^{-1}\lang$.
Variables $a,b,\ldots$ are always atoms and $u,w,\ldots$ are always words.

\newcommand{\currlang}{\lang_\up{d}}
\subsection*{Deterministic: First symbol equals last symbol}
Consider the language
\[ \currlang \coloneqq \{ awa \mid a \in \atoms, w \in \atoms^{*} \} . \]
This is accepted by the following deterministic nominal automaton (Figure~\ref{fig:det}).
The automaton is actually infinite-state, but we represent it symbolically using a register-like notation, where we annotate each state with the current value of a register.
Note that the derivatives $a^{-1}\currlang, b^{-1}\currlang, \ldots$ are in the same orbit.
In total $\Res(\currlang)$ has three orbits, which correspond to the three orbits of states in the deterministic automaton.
The derivative $awa^{-1}\currlang$, for example, equals $aa^{-1}\currlang$.

\begin{figure}[h!]
\centering
\begin{tikzpicture}[automaton, baseline=(current bounding box.center)]
\node[initial,state,initial text={$\aut_\up{d} = $}] (q0) at (0,0) {};
\node[state] (q1) at (1,0) {$a$};
\node[state,accepting] (q2) at (2,0) {$a$};

\path[tr]
(q0) edge node {$a$} (q1)
(q1) edge[loop above] node {$\neq a$} (q1)
(q1) edge[bend left] node {$a$} (q2)
(q2) edge[loop above] node {$a$} (q2)
(q2) edge[bend left] node {$\neq a$} (q1);
\end{tikzpicture}\hspace{2cm}
\begin{tikzpicture}[poset, baseline=(current bounding box.center)]
\node (L) at (0, 0.5) {$\currlang$};
\node (aL) at (1, 0) {$a^{-1}\currlang$};
\node (bL) at (2, 0) {$b^{-1}\currlang$};
\node (cL) at (3, 0) {\strut$\cdots$};
\node (aaL) at (1, 1) {$aa^{-1}\currlang$};
\node (bbL) at (2, 1) {$bb^{-1}\currlang$};
\node (ccL) at (3, 1) {\strut$\cdots$};
\draw (aL) -- (aaL) (bL) -- (bbL) (cL) -- (ccL);
\begin{scope}[on background layer]
\node [orbit, fit=(L)] {};
\node [orbit, fit=(aL)(bL)(cL)] {};
\node [orbit, fit=(aaL)(bbL)(ccL)] {};
\end{scope}
\end{tikzpicture}
\caption{A deterministic automaton accepting $\currlang$, and the poset $\Res(\currlang)$.}
\label{fig:det}
\end{figure}

\renewcommand{\currlang}{\lang_\up{ng,r}}
\subsection*{Non-guessing residual: Some atom occurs twice}
The language is
\[ \currlang \coloneqq \{ uavaw \mid u,v,w \in \atoms^{*}, a \in \atoms \} . \]
The poset $\Res(\currlang)$ is not orbit-finite, so by the nominal Myhill-Nerode theorem there is no deterministic automaton accepting $\currlang$.
However, derivatives of the form $ab^{-1}\currlang$ can be written as a union $ab^{-1}\currlang = a^{-1}\currlang \cup b^{-1}\currlang$.
In fact, we only need an orbit-finite set of derivatives to recover $\Res(\currlang)$.
These orbits are highlighted in the diagram on the right (Figure~\ref{fig:ng-res}).
Selecting the ``right'' derivatives is the key idea behind constructing residual automata in Theorem~\ref{thm:residual-characterisation}.

\begin{figure}[h!]
\centering
\begin{tikzpicture}[automaton, baseline=0.1cm]
\node[initial,state,initial text={$\aut_\up{ng,r} = $}] (q0) at (0,0) {};
\node[state] (q1) at (1,0) {$a$};
\node[state,accepting] (q2) at (2,0) {};

\path[tr]
(q0) edge[loop above] node {$\atoms$} (q0)
(q0) edge node {$a$} (q1)
(q1) edge[loop above] node {$\atoms$} (q1)
(q1) edge node {$a$} (q2)
(q2) edge[loop above] node {$\atoms$} (q2);
\end{tikzpicture}
\hspace{2cm}
\begin{tikzpicture}[poset, baseline=(current bounding box.center)]
\node (L) at (0,0) {$\currlang$};
\node (aL) at (-0.75, 1) {$a^{-1}\currlang$};
\node (cL) at (0, 1) {$\cdots$};
\node (bL) at (0.75, 1) {$b^{-1}\currlang$};
\node (abL) at (0, 2) {$ab^{-1}\currlang$};
\node (cdL) at (-1, 2) {$\cdots$};
\node (dcL) at (1, 2) {$\cdots$};
\node (abcL) at (0, 3) {$abc^{-1}\currlang$};
\node (cdeL) at (-1, 3) {$\cdots$};
\node (edcL) at (1, 3) {$\cdots$};
\node (aaL) at (0, 4.5) {$aa^{-1}\currlang$};
\draw (L) -- (aL) (L) -- (bL) (aL) -- (abL) (bL) -- (abL) (abL) -- (abcL);
\draw[dashed] (abcL) -- (aaL);
\begin{scope}[on background layer]
\draw[black!15] (L) -- (cL) (cL) -- (cdL) (cL) -- (dcL) (aL) -- (cdL) (bL) -- (dcL) (cdL) -- (cdeL) (cdL) -- (abcL) (abL) -- (cdeL) (abL) -- (edcL) (dcL) -- (abcL) (dcL) -- (edcL);
\draw[dashed, black!15] (cdeL) -- (aaL) (edcL) -- (aaL);
\node [orbit, draw=red, fit=(L)] {};
\node [orbit, draw=red, fit=(aL)(bL)(cL)] {};
\node [orbit, fit=(abL)(cdL)(dcL)] {};
\node [orbit, fit=(abcL)(cdeL)(edcL)] {};
\node [orbit, draw=red, fit=(aaL)] {};
\end{scope}
\end{tikzpicture}
\caption{A nondeterministic automaton accepting $\currlang$, and the poset $\Res(\currlang)$. The depicted automaton is not residual, but a residual automaton exists by Theorem~\ref{thm:residual-characterisation}.}
\label{fig:ng-res}
\end{figure}

\renewcommand{\currlang}{\lang_\up{n}}
\subsection*{Nondeterministic: Last letter is unique}
We consider the language
\[ \currlang \coloneqq \{ wa \mid a \text{ not in } w \} \cup \{ \epsilon \} . \]
Derivatives $a^{-1}\currlang$ are again unions of smaller languages: $a^{-1}\currlang = \bigcup_{b \neq a} ab^{-1}\currlang$.
However, the poset $\Res(\lang)$ has an infinite descending chain of languages (with an increasing support), namely $a^{-1}\lang \supset ab^{-1} \lang \supset abc^{-1}\lang \supset \ldots$
Figure~\ref{fig:nondet} shows this descending chain, where we have omitted languages like $aa^{-1}\currlang$, as they only differ from $a^{-1}\currlang$ on the empty word.
The existence of a such a chain implies that $\currlang$ cannot be accepted by a residual automaton. This is a consequence of Theorem~\ref{thm:residual-characterisation}, as we shall see later.

\begin{figure}[h!]
\centering
\begin{tikzpicture}[automaton, baseline=0.1cm]
\node[state] (q0) at (0,0) {$a$};
\node[state, accepting] (q1) at (1,0) {};
\node (init) at (-1.5, 0) {$\aut_\up{n} = $};
\node (init2) at (1.66, 0) {};

\path[tr]
(init) edge node {guess $a$} (q0)
(init2) edge (q1)
(q0) edge node {$a$} (q1)
(q0) edge[loop above] node {$\neq a$} (q0);
\end{tikzpicture}
\hspace{2cm}
\begin{tikzpicture}[poset, baseline=(current bounding box.center)]
\node (L) at (0,0) {$\currlang$};
\node (aL) at (-0.7, -1) {$a^{-1}\currlang$};
\node (cL) at (0, -1) {$\cdots$};
\node (bL) at (0.7, -1) {$b^{-1}\currlang$};
\node (abL) at (0, -2) {$ab^{-1}\currlang$};
\node (cdL) at (-0.95, -2) {$\cdots$};
\node (dcL) at (0.95, -2) {$\cdots$};
\node (abcL) at (0, -3) {$abc^{-1}\currlang$};
\node (cdeL) at (-1, -3) {$\cdots$};
\node (edcL) at (1, -3) {$\cdots$};
\node (inf) at (0, -4.0) {};
\node (infl) at (-1, -4.0) {};
\node (infr) at (1, -4.0) {};
\draw (L) -- (aL) (L) -- (bL) (aL) -- (abL) (bL) -- (abL) (abL) -- (abcL);
\draw[dashed] (abcL) -- (inf);
\begin{scope}[on background layer]
\draw[black!15] (L) -- (cL) (cL) -- (cdL) (cL) -- (dcL) (aL) -- (cdL) (bL) -- (dcL) (cdL) -- (cdeL) (cdL) -- (abcL) (abL) -- (cdeL) (abL) -- (edcL) (dcL) -- (abcL) (dcL) -- (edcL);
\draw[dashed, black!15] (cdeL) -- (infl) (edcL) -- (infr);
\node [orbit, fit=(L)] {};
\node [orbit, fit=(aL)(bL)(cL)] {};
\node [orbit, fit=(abL)(cdL)(dcL)] {};
\node [orbit, fit=(abcL)(cdeL)(edcL)] {};
\end{scope}
\end{tikzpicture}
\caption{A nondeterministic automaton accepting $\currlang$, and the poset $\Res(\currlang)$.}
\label{fig:nondet}
\end{figure}

\renewcommand{\currlang}{\lang_\up{r}}
\subsection*{Residual: Last letter is unique but anchored}
We reconsider the previous automaton $\aut_\up{n}$ and add a transition in order to make the automaton residual.
First, we extend the alphabet to $\Sigma = \atoms \cup \left\{ \Anchor(a) \mid a \in \atoms \right\}$, where $\Anchor$ is nothing more than a label.
Then, we add the transitions $(a, \Anchor(a), a)$ for each $a \in \atoms$, see Figure~\ref{fig:res}.
The language accepted by the new automaton is
\[ \currlang = \{ wa \mid a \in \atoms, w \in (\atoms \setminus \{a\} \cup \{ \Anchor(a) \})^* \} \ch{\cup \{ \epsilon \}} \]
Here, we have forced the automaton to be residual, by adding an anchor to the first state.
Nevertheless, guessing is still necessary.
In the poset, we note that all elements in the descending chain can now be obtained as unions of $\Anchor(a)^{-1}\currlang$.
For instance, $a^{-1}\currlang = \bigcup_{b \neq a} \Anchor(b)^{-1}\currlang$.
Note that $\Anchor(a)\Anchor(b)^{-1}\currlang = \emptyset$ and $\Anchor(a)a^{-1}\currlang = \{ \epsilon \}$.

\begin{figure}[h!]
\centering
\begin{tikzpicture}[automaton, baseline=-0.1cm]
\node[state] (q0) at (0,0) {$a$};
\node[state, accepting] (q1) at (1,0) {};
\node (init) at (-1.5, 0) {$\aut_\up{r} = $};
\node (init2) at (1.66, 0) {};

\path[tr]
(init) edge node {guess $a$} (q0)
(init2) edge (q1)
(q0) edge node {$a$} (q1)
(q0) edge[loop above] node {$\neq a$} (q0)
(q0) edge[loop below] node {$\Anchor(a)$} (q0);
\end{tikzpicture}
\hspace{2cm}
\begin{tikzpicture}[poset, baseline=(current bounding box.center)]
\node (L) at (0,0) {$\currlang$};
\node (aL) at (-0.7, -1) {$a^{-1}\currlang$};
\node (cL) at (0, -1) {$\cdots$};
\node (bL) at (0.7, -1) {$b^{-1}\currlang$};
\node (abL) at (0, -2) {$ab^{-1}\currlang$};
\node (cdL) at (-0.95, -2) {$\cdots$};
\node (dcL) at (0.95, -2) {$\cdots$};
\node (abcL) at (0, -3) {$abc^{-1}\currlang$};
\node (cdeL) at (-1, -3) {$\cdots$};
\node (edcL) at (1, -3) {$\cdots$};
\node (anc) at (-0.91, -5) {$\Anchor(c)^{-1}\currlang$};
\node (anca) at (1, -5) {$\Anchor(a)a^{-1}\currlang$};
\node (ancanc) at (0, -6) {$\Anchor(c)\Anchor(d)^{-1}\currlang$};
\draw (L) -- (aL) (L) -- (bL) (aL) -- (abL) (bL) -- (abL) (abL) -- (abcL);
\draw (anc) -- (aL) (anc) -- (bL) (anc) -- (abL) (anca) -- (bL) (ancanc) -- (anc) (ancanc) -- (anca);
\draw[dashed] (abcL) -- (ancanc);
\begin{scope}[on background layer]
\draw[black!15] (L) -- (cL) (cL) -- (cdL) (cL) -- (dcL) (aL) -- (cdL) (bL) -- (dcL) (cdL) -- (cdeL) (cdL) -- (abcL) (abL) -- (cdeL) (abL) -- (edcL) (dcL) -- (abcL) (dcL) -- (edcL);
\draw[black!15] (anc) -- (cdeL) (anc) -- (edcL) (anc) -- (cdL) (anc) -- (dcL) (anc) -- (cL) (anca) -- (cdeL) (anca) -- (cdL) (anca) -- (dcL) (anca) -- (edcL) (anca) -- (cL);
\draw[dashed, black!15] (cdeL) -- (ancanc) (edcL) -- (ancanc);
\node [orbit, fit=(L)] {};
\node [orbit, fit=(aL)(bL)(cL)] {};
\node [orbit, fit=(abL)(cdL)(dcL)] {};
\node [orbit, fit=(abcL)(cdeL)(edcL)] {};
\node [orbit, draw=red, fit=(anc)] {};
\node [orbit, draw=red, fit=(anca)] {};
\node [orbit, fit=(ancanc)] {};
\end{scope}
\end{tikzpicture}
\caption{A residual automaton accepting $\currlang$, and the poset $\Res(\currlang)$.}
\label{fig:res}
\end{figure}

\renewcommand{\currlang}{\lang_\up{ng}}
\subsection*{Non-guessing nondeterministic: Repeated atom with different successor}
Consider the language
\[ \currlang \coloneqq \left\{ u a b v a c \mid u,v \in \atoms^{*}, a,b,c \in \atoms, \ch{b \neq c} \right\} . \]
Here, we allow $a=b$ or $a=c$ in this definition.
This is a language which can be accepted by a non-guessing automaton (Figure~\ref{fig:ng-nondet}).
However, there is no residual automaton for this language.
The poset structure of $\Res(\currlang)$ is very complicated.
We will return to this example after Theorem~\ref{thm:residual-characterisation}.

\begin{figure}[h!]
\centering
\begin{tikzpicture}[automaton, baseline=-.6cm]
\node[state, initial, initial text={$\aut_\up{ng} =$}] (q0) at (0,0) {};
\node[state] (q1) at (1,0) {$a$};
\node[state] (q2) at (0,-1) {$ab$};
\node[state] (q3) at (1,-1) {$b$};
\node[state,accepting] (q4) at (2,-1) {};

\path[tr]
(q0) edge[loop above] node {$\atoms$} (q0)
(q0) edge node {$a$} (q1)
(q2) edge[loop above] node {$\atoms$} (q2)
(q2) edge node {$a$} (q3)
(q3) edge node {$\neq b$} (q4);
\draw[tr, rounded corners=15pt] (q1) -- node {$b$} (2,0) -- (q2);
\end{tikzpicture}\hspace{2cm}
\begin{tikzpicture}[poset, x=2.55cm, baseline=(current bounding box.center)]
\node (L) at (0,0) {$\currlang$};
\node (aL) at (-0.33, 1) {$a^{-1}\currlang$};
\node (bL) at (0.33, 1) {$b^{-1}\currlang$};
\node (aaL) at (-1, 2) {$aa^{-1}\currlang$};
\node (baL) at (0, 2) {$ba^{-1}\currlang$};
\node (abL) at (1, 2) {$ab^{-1}\currlang$};
\node (abaL) at (-0.33, 3) {$aba^{-1}\currlang$};
\node (cbaL) at (0.33, 3) {$cba^{-1}\currlang$};
\node (inf1) at (-1, 3.73) {};
\node (inf2) at (-0.33, 3.73) {};
\node (inf3) at (0.33, 3.73) {};
\node (inf4) at (1, 3.73) {};
\draw (L) -- (aL) (L) -- (bL) (aL) -- (aaL) (aL) -- (baL) (bL) -- (abL) (baL) -- (abaL) (baL) -- (cbaL);
\draw[dashed] (inf1) -- (aaL) (inf2) -- (abaL) (inf3) -- (cbaL) (inf4) -- (abL);
\begin{scope}[on background layer]
\node [orbit, fit=(L)] {};
\node [orbit, fit=(aL)(bL)] {};
\node [orbit, fit=(aaL)] {};
\node [orbit, fit=(baL)] {};
\node [orbit, fit=(abL)] {};
\node [orbit, fit=(abaL)] {};
\node [orbit, fit=(cbaL)] {};
\end{scope}
\end{tikzpicture}
\caption{A deterministic automaton accepting $\currlang$, and the poset $\Res(\currlang)$.}
\label{fig:ng-nondet}
\end{figure}

\section{Canonical Residual Nominal Automata}
\label{sec:crna}

In this section we will give a characterisation of \emph{canonical} residual automata. We will first introduce notions of nominal lattice theory, then we will state our main result~(Theorem~\ref{thm:residual-characterisation}).
We conclude the section by providing similar results for non-guessing automata.

\subsection{Theory of Nominal Join-Semilattices}
\label{sec:nominal-join-semilattices}

We abstract away from words and languages and consider the set $\Powfs(Z)$ for an arbitrary nominal set $Z$.
This is a Boolean algebra of which the operations $\wedge, \vee, \neg$ are all equivariant maps~\cite{GabbayLP11}.
Moreover, the finitely supported union
\[
\bigvee \colon \Powfs(\Powfs(Z)) \to \Powfs(Z)
\]
is also equivariant.
We note that this is more general than a binary union, but it is not a complete join-semilattice.
Hereafter, we shall denote set inclusion by $\leq$ ($<$ when strict).

\begin{definition}
Given a nominal set $Z$ and $X \subseteq \Powfs(Z)$ equivariant\footnote{A similar definition could be given for finitely supported $X$. In fact, all results in this section generalise to finitely supported. But we use equivariance for convenience.},
we define the set generated by $X$ as
\[
\clos{X} \coloneqq \left\{ \bigvee \mathfrak{x} \mid \mathfrak{x} \subseteq X \text{ finitely supported} \right\} \subseteq \Powfs(Z).
\]
\end{definition}

\begin{remark}
The set $\clos{X}$ is closed under the operation $\bigvee$, and moreover is the smallest equivariant set closed under $\bigvee$ containing $X$. In other words, $\clos{-}$ defines a closure operator.
We will often say ``$X$ generates $Y$'', by which we mean $Y \subseteq \clos{X}$.
\end{remark}

\begin{definition}
\label{def:join-irred}
Let $X \subseteq \Powfs(Z)$ equivariant and $x \in X$, we say that $x$ is \emph{join-irreducible in $X$} if it is non-empty and
\[ x = \bigvee \mathfrak{x} \quad \implies \quad x \in \mathfrak{x} , \]
for every finitely supported $\mathfrak{x} \subseteq X$.
The subset of all join-irreducible elements is denoted by
\[
\JI(X) \coloneqq \left\{ x \in X \mid x \text{ join-irreducible in } X \right\}.
\]
This is again an equivariant set.
For convenience, we may use the following equivalent definition of join-irreducible:
$x$ is non-empty and $( \forall x_0 \in \mathfrak{x} . x_0 < x )\implies \bigvee \mathfrak{x} < x $, \ch{for every finitely supported $\mathfrak{x} \subseteq X$.}
\end{definition}

\begin{remark}
In lattice and order theory, join-irreducible elements are usually defined only for a lattice (see, e.g.,~\cite{DaveyP02}).
However, we define them for arbitrary subsets of a lattice.
(\ch{Note that a subset of a lattice is not a sub-lattice in general.})
This generalisation will be needed later, when we consider the poset $\Res(\lang)$, \ch{which is contained in the lattice $\Powfs(\Sigma^*)$, but it is not a sub-lattice.}
\end{remark}

\begin{remark}
The notion of join-irreducible, as we have defined here, corresponds to the notion of \emph{prime} in some papers on learning nondeterministic automata~\cite{BolligHKL09, DenisLT02, MoermanS0KS17}.
Unfortunately, the word \emph{prime} has a slightly different meaning in lattice theory.
We stick to the terminology of lattice theory.
\end{remark}

If a set $Y$ is well-behaved, then its join-irreducible elements will actually generate the set $Y$.
This is normally proven with a descending chain condition.
We first restrict our attention to orbit-finite sets.
The following Lemma extends~\cite[Lemma~2.45]{DaveyP02} to the nominal setting.
\ch{The proof is analogous to the ordinary case, except that is relies on the specific structure of equality atoms (Lemma~\ref{lem:minimal-in-poset}).}

\begin{lemma}\label{lem:join-of-ji}
Let $X \subseteq \Powfs(Z)$ be an orbit-finite and equivariant set.
\begin{enumerate}
\item
Let $a \in X, b \in \Powfs(Z)$ and $a \not\leq b$.
Then there is $x \in \JI(X)$ such that $x \leq a$ and $x \not\leq b$.
\item
Let $a \in X$, then $a = \bigvee \left\{ x \in X \mid x \text{ join-irreducible in } X \text{ and } x \leq a \right\}$.
\end{enumerate}
\end{lemma}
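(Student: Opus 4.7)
The plan is to prove (1) by a minimality argument using \Cref{lem:minimal-in-poset}, and then deduce (2) as a direct consequence of (1).

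For (1), I would consider the set
\[
S \coloneqq \{ y \in X \mid y \leq a \text{ and } y \not\leq b \},
\]
which is non-empty because $a \in S$. Setting $A \coloneqq \supp(a) \cup \supp(b)$, the set $S$ is $A$-equivariant: if $\pi$ fixes $A$ pointwise, then $\pi a = a$, $\pi b = b$, and the conditions defining $S$ are preserved. Since $X$ is orbit-finite (hence also $A$-orbit-finite, as each orbit splits into finitely many $A$-orbits), $S$ is $A$-orbit-finite. Invoking the $A$-equivariant version of \Cref{lem:minimal-in-poset}(2), $S$ contains a minimal element $x$. It remains to argue that $x$ is join-irreducible in $X$. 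Note that $x \not\leq b$ forces $x$ to be non-empty. Suppose $x = \bigvee \mathfrak{x}$ for some finitely supported $\mathfrak{x} \subseteq X$. If $x \notin \mathfrak{x}$, every $x_0 \in \mathfrak{x}$ satisfies $x_0 \leq x$ with $x_0 \neq x$, i.e., $x_0 < x$. Moreover, at least one $x_0 \in \mathfrak{x}$ must fail $x_0 \leq b$, since otherwise $x = \bigvee \mathfrak{x} \leq b$, contradicting $x \in S$. Such an $x_0$ then lies in $S$ and is strictly below $x$, contradicting minimality. Hence $x \in \mathfrak{x}$, proving that $x \in \JI(X)$.

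For (2), let $Y \coloneqq \{x \in X \mid x \text{ join-irreducible in } X \text{ and } x \leq a\}$. Because $X$ and $\JI(X)$ are equivariant and $Y$ depends only on $a$, the set $Y$ is supported by $\supp(a)$, so $\bigvee Y$ is well-defined. The inequality $\bigvee Y \leq a$ is immediate from the definition of $Y$. For the converse, suppose for contradiction that $a \not\leq \bigvee Y$. Applying (1) with $b \coloneqq \bigvee Y$ yields some $x \in \JI(X)$ with $x \leq a$ and $x \not\leq \bigvee Y$. But then $x \in Y$ by construction, so $x \leq \bigvee Y$, a contradiction. Therefore $a = \bigvee Y$.

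The main subtlety I anticipate is the careful handling of supports when applying \Cref{lem:minimal-in-poset}: the lemma is stated in the equivariant setting, but here $a$ and $b$ may carry non-trivial support, forcing us to reason $A$-equivariantly for $A = \supp(a) \cup \supp(b)$. The preliminaries explicitly allow this lift, so the only real work is to verify that $S$ is indeed $A$-equivariant and $A$-orbit-finite, and to set up the contradiction via a strictly smaller element of $S$ rather than of $X$.
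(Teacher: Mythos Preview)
Your proposal is correct and follows essentially the same approach as the paper: both define the same set $S$, invoke the (support-relativised) version of \Cref{lem:minimal-in-poset} to extract a minimal element, verify join-irreducibility via minimality, and then derive (2) from (1) by contradiction with $b = \bigvee Y$. The only cosmetic difference is that the paper uses the equivalent ``all $x_0 < m$ implies $\bigvee \mathfrak{x} < m$'' formulation of join-irreducibility, showing directly that every strictly smaller $x_0$ must lie below $b$, whereas you argue contrapositively by locating one $x_0 \not\leq b$ in $S$; both are the same minimality argument.
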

\begin{proof}
\textit{Ad 1.}
Consider the set $S = \left\{ x \in X \mid x \leq a, x \not\leq b \right\}$.
This is a finitely supported and $\supp(S)$-orbit-finite set, hence it has some minimal element $m \in S$ by Lemma~\ref{lem:minimal-in-poset} (here we are using its generalisation to \ch{finitely-supported sets}).
We shall prove that $m$ is join-irreducible in $X$.
Let $\mathfrak{x} \subseteq X$ finitely supported and assume that $x_0 < m$ for each $x_0 \in \mathfrak{x}$.
Note that $x_0 < m \leq a$ and so that $x_0 \notin S$ (otherwise $m$ was not minimal).
Hence $x_0 \leq b$ (by definition of $S$).
So $\bigvee \mathfrak{x} \leq b$ and so $\bigvee \mathfrak{x} \notin S$, which concludes that $\bigvee \mathfrak{x} \neq m$, and so $\bigvee \mathfrak{x} < m$ as required.

\textit{Ad 2.}
Consider the set $T = \left\{ x \in \JI(X) \mid x \leq a \right\}$.
This set is finitely supported, so we may define the element $b = \bigvee T \in \Powfs(Z)$.
It is clear that $b \leq a$, we shall prove equality by contradiction.
Suppose $a \not\leq b$, then by \textit{(1.)}, there is a join-irreducible $x$ such that $x \leq a$ and $x \not\leq b$.
By the first property of $x$ we have $x \in T$, so that $x \not\leq b = \bigvee T$ is a contradiction.
We conclude that $a = b$, i.e., $a = \bigvee T$ as required.
\end{proof}

\begin{corollary}
Let $X \subseteq \Powfs(Z)$ be an orbit-finite equivariant subset.
The join-irreducibles of $X$ generate $X$, i.e., $X \subseteq \clos{\JI(X)}$.
\end{corollary}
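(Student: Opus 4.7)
The plan is to apply part (2) of Lemma~\ref{lem:join-of-ji} pointwise. Fix an arbitrary $a \in X$; I want to exhibit $a$ as a finitely supported join of elements of $\JI(X)$, from which $a \in \clos{\JI(X)}$ will follow directly from the definition of $\clos{-}$.

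First, I would set $T_a \coloneqq \{ x \in \JI(X) \mid x \leq a \}$. By Lemma~\ref{lem:join-of-ji}(2), $a = \bigvee T_a$, so it only remains to check that $T_a$ is a finitely supported subset of $\JI(X)$ with support included in $\supp(a)$. This is where the equivariance assumptions do the work: since $X$ is equivariant, so is $\JI(X)$ (the property of being join-irreducible in $X$ is preserved by permutations, as it is stated purely in terms of $X$, finitely supported subsets, and the equivariant order $\leq$). Combined with equivariance of $\leq$, any permutation $\pi$ fixing $\supp(a)$ pointwise sends $T_a$ to itself: if $x \in T_a$ then $\pi x \in \JI(X)$ and $\pi x \leq \pi a = a$, so $\pi x \in T_a$, and symmetrically for $\pi^{-1}$. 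Thus $T_a$ is supported by $\supp(a)$, in particular finitely supported.

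With $T_a \subseteq \JI(X)$ finitely supported and $a = \bigvee T_a$, the definition of $\clos{\JI(X)}$ gives $a \in \clos{\JI(X)}$. Since $a \in X$ was arbitrary, $X \subseteq \clos{\JI(X)}$, as required. I do not expect any real obstacle here: the bulk of the work has already been discharged by Lemma~\ref{lem:join-of-ji}, and the only mild technical point is the bookkeeping argument that $T_a$ inherits a finite support from $a$ via equivariance of $\JI(X)$ and $\leq$.
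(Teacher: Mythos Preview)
Your proposal is correct and matches the paper's approach: the paper states this as an immediate corollary of Lemma~\ref{lem:join-of-ji}(2) with no separate proof, and the finite-support check you spell out for $T_a$ is already carried out verbatim inside the proof of that lemma (where it is noted that $T$ is finitely supported before taking $\bigvee T$). You are simply making explicit the one-line deduction the paper leaves implicit.
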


So far, we have defined join-irreducible elements relative to some fixed set.
We will now show that these elements remain join-irreducible when considering them in a bigger set, as long as the bigger set is generated by the smaller one.
This will later allow us to talk about \emph{the} join-irreducible elements.

\begin{lemma}\label{lem:absolute-ji}
Let $Y \subseteq X \subseteq \Powfs(Z)$ equivariant and suppose that $X \subseteq \clos{\JI(Y)}$.
Then $\JI(Y) = \JI(X)$.
\end{lemma}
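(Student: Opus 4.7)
The plan is to prove the two inclusions $\JI(X) \subseteq \JI(Y)$ and $\JI(Y) \subseteq \JI(X)$ separately. The first is essentially immediate, while the second requires a uniform decomposition trick using the generation hypothesis.

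For the easy direction $\JI(X) \subseteq \JI(Y)$, take any $x \in \JI(X)$. By the hypothesis $X \subseteq \clos{\JI(Y)}$, there is a finitely supported $\mathfrak{y} \subseteq \JI(Y)$ with $x = \bigvee \mathfrak{y}$. Since $\mathfrak{y} \subseteq \JI(Y) \subseteq Y \subseteq X$, the witness $\mathfrak{y}$ qualifies as a finitely supported subset of $X$, so join-irreducibility in $X$ forces $x \in \mathfrak{y} \subseteq \JI(Y)$. In particular $x \in Y$, and $x$ is join-irreducible in $Y$ because any finitely supported decomposition in $Y$ is also one in $X$.

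For the harder direction $\JI(Y) \subseteq \JI(X)$, let $x \in \JI(Y)$ and suppose $x = \bigvee \mathfrak{x}$ for some finitely supported $\mathfrak{x} \subseteq X$. The key idea is to replace each $a \in \mathfrak{x}$ by the canonical $\JI(Y)$-decomposition: define
\[ \mathfrak{y} \coloneqq \{\, y \in \JI(Y) \mid \exists a \in \mathfrak{x}.\ y \leq a \,\}. \]
Two things need checking. First, $\mathfrak{y}$ is finitely supported, with support contained in $\supp(\mathfrak{x})$; this uses that $\JI(Y)$ is equivariant and that $\leq$ is equivariant, so that every permutation fixing $\supp(\mathfrak{x})$ preserves membership in $\mathfrak{y}$. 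Second, $\bigvee \mathfrak{y} = x$: each $a \in \mathfrak{x}$ lies in $\clos{\JI(Y)}$, and any representation $a = \bigvee \mathfrak{y}_a$ with $\mathfrak{y}_a \subseteq \JI(Y)$ satisfies $\mathfrak{y}_a \subseteq \{y \in \JI(Y) \mid y \leq a\}$, whence $a = \bigvee \{y \in \JI(Y) \mid y \leq a\}$; joining over $a \in \mathfrak{x}$ yields $x$. Since $\mathfrak{y} \subseteq \JI(Y) \subseteq Y$ and $x$ is join-irreducible in $Y$, we conclude $x \in \mathfrak{y}$, so $x \leq a$ for some $a \in \mathfrak{x}$; combined with $a \leq \bigvee \mathfrak{x} = x$ this gives $x = a \in \mathfrak{x}$, as required.

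The main subtlety, and the only step that is not entirely formal, is the finite-support argument for $\mathfrak{y}$ in the second direction. One might be tempted to just pick an arbitrary decomposition $\mathfrak{y}_a$ for each $a \in \mathfrak{x}$ and take the union, but a non-uniform choice need not yield a finitely supported set. Defining $\mathfrak{y}$ as the \emph{maximal} such family (all $\JI(Y)$-elements below some $a \in \mathfrak{x}$) sidesteps this, since the definition is uniform in $a$ and refers only to equivariant data besides $\mathfrak{x}$ itself.
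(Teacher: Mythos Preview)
Your proof is correct and follows essentially the same approach as the paper: both directions match, and in the harder direction you define the same canonical set $\mathfrak{y} = \bigcup_{a \in \mathfrak{x}} \{y \in \JI(Y) \mid y \leq a\}$ and argue exactly as the paper does that $x \in \mathfrak{y}$ forces $x \in \mathfrak{x}$. Your write-up is in fact more explicit than the paper's on two points the paper leaves implicit: that $x \in Y$ in the first direction, and why $\mathfrak{y}$ is finitely supported in the second.
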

\begin{proof}
($\supseteq$)
Let $x \in X$ be join-irreducible in $X$.
Suppose that $x = \bigvee \mathfrak{y}$ for some finitely supported $\mathfrak{y} \subseteq Y$.
Note that also $\mathfrak{y} \subseteq X$
Then $x = y_0$ for some $y_0 \in \mathfrak{y}$, and so $x$ is join-irreducible in $Y$.

($\subseteq$)
Let $y \in Y$ be join-irreducible in $Y$.
Suppose that $y = \bigvee \mathfrak{x}$ for some finitely supported $\mathfrak{x} \subseteq X$.
Note that every element $x \in \mathfrak{x}$ is a union of elements in $\JI(Y)$ (by the assumption $X \subseteq \clos{\JI(Y)}$).
Take $\mathfrak{y}_x = \left\{ y \in \JI(Y) \mid y \leq x \right\}$, then we have $x = \bigvee \mathfrak{y}_x$ and
\[ y = \bigvee \mathfrak{x}
= \bigvee \left\{ \bigvee \mathfrak{y}_x \mid x \in \mathfrak{x} \right\}
= \bigvee \left\{ y_0 \mid y_0 \in \mathfrak{y}_x, x \in \mathfrak{
	x} \right\} . \]
The last set is a finitely supported subset of $Y$, and so there is a $y_0$ in it such that $y = y_0$.
Moreover, this $y_0$ is below some $x_0 \in \mathfrak{x}$, which gives $y_0 \leq x_0 \leq y$.
We conclude that $y = x_0$ for some $x_0 \in \mathfrak{x}$.
\end{proof}

In other words, the join-irreducibles of $X$ are the smallest set generating $X$.

\begin{corollary}\label{cor:smallest}
If an orbit-finite set $Y$ generates $X$, then $\JI(X) \subseteq Y$.
\end{corollary}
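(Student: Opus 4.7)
The plan is to deduce the corollary in one step from Lemma~\ref{lem:absolute-ji}, using the orbit-finiteness of $Y$ to bring in the previously established fact that an orbit-finite equivariant subset of $\Powfs(Z)$ is generated by its own join-irreducibles (the unnumbered corollary right after Lemma~\ref{lem:join-of-ji}). I read ``$Y$ generates $X$'' as the conjunction $Y \subseteq X$ together with $X \subseteq \clos{Y}$; the first clause appears implicit in the phrasing, and without it the statement would be false, as witnessed by $X = \{\atoms\}$ generated by the one-orbit set $Y = \{\atoms \setminus \{a\} \mid a \in \atoms\}$, where $\atoms \in \JI(X)$ does not belong to $Y$.

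First, since $Y$ is orbit-finite and equivariant, the corollary after Lemma~\ref{lem:join-of-ji} gives $Y \subseteq \clos{\JI(Y)}$. Combined with the assumption $X \subseteq \clos{Y}$ and the monotonicity and idempotence of the closure operator $\clos{-}$, this upgrades the generating set from $Y$ to $\JI(Y)$, yielding $X \subseteq \clos{\JI(Y)}$.

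The hypotheses of Lemma~\ref{lem:absolute-ji} are now met for the pair $(Y, X)$: we have $Y \subseteq X$ and $X \subseteq \clos{\JI(Y)}$. The lemma therefore gives $\JI(Y) = \JI(X)$, and since $\JI(Y) \subseteq Y$ by definition, the desired inclusion $\JI(X) \subseteq Y$ follows immediately. The only subtlety is the implicit $Y \subseteq X$; once that is spelled out the proof is a one-line composition of the two preceding results, with no further computation needed.
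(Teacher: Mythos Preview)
Your proof is correct and matches the paper's intended argument: the paper presents the corollary immediately after Lemma~\ref{lem:absolute-ji} with no explicit proof, merely prefacing it with ``In other words,'' so the natural reading is exactly the composition you give --- use orbit-finiteness of $Y$ to get $Y \subseteq \clos{\JI(Y)}$, push this through the closure to obtain $X \subseteq \clos{\JI(Y)}$, and then invoke Lemma~\ref{lem:absolute-ji}. Your observation that the hypothesis $Y \subseteq X$ is implicit but essential is well taken; the paper's own definition of ``generates'' does not include it, but both Lemma~\ref{lem:absolute-ji} and the sole application of the corollary (state minimality of the canonical residual automaton, where the state languages are derivatives) have $Y \subseteq X$, and your counterexample shows the statement fails otherwise.
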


\subsection{Characterising Residual Languages}

We are now ready to state and prove the main theorem of this paper.
We fix the alphabet $\Sigma$.
Recall that the nominal Myhill-Nerode theorem tells us that a language is accepted by a deterministic automaton if and only if $\Res(\lang)$ is orbit-finite.
Here, we give a similar characterisation for languages accepted by residual automata.
Moreover, the following result gives a canonical construction.

\begin{theorem}
\label{thm:residual-characterisation}
Given a language $\lang \subseteq \Powfs(\Sigma^*)$, the following are equivalent:
\begin{enumerate}
\item
$\lang$ is accepted by a residual automaton.
\item
There is some orbit-finite set $J \subseteq \Res(\lang)$ which generates $\Res(\lang)$.
\item
The set $\JI(\Res(\lang))$ is orbit-finite and generates $\Res(\lang)$.
\end{enumerate}
\end{theorem}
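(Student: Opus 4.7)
The plan is to prove the cycle $(1) \Rightarrow (2) \Rightarrow (3) \Rightarrow (1)$; the real content lies in the construction for $(3) \Rightarrow (1)$, while the other two implications follow quickly from the nominal lattice theory already developed.

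For $(1) \Rightarrow (2)$, I would start from a residual automaton $\aut$ accepting $\lang$ and take $J = \{\lang(q) \mid q \in Q\}$. This is orbit-finite, being the image of the equivariant map $q \mapsto \lang(q)$ on the orbit-finite set $Q$, and lies inside $\Res(\lang)$ by residuality. For generation, the nondeterministic run semantics gives $u^{-1}\lang = \bigvee_{q \in \delta(I,u)} \lang(q)$ for every word $u$, which is a finitely supported join of elements of $J$.

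For $(2) \Rightarrow (3)$, assume $J \subseteq \Res(\lang)$ is orbit-finite and generates $\Res(\lang)$. Since $J$ is itself orbit-finite, Lemma~\ref{lem:join-of-ji}(2) yields $J \subseteq \clos{\JI(J)}$, whence $\Res(\lang) \subseteq \clos{J} \subseteq \clos{\JI(J)}$. Applying Lemma~\ref{lem:absolute-ji} with $Y=J$ and $X=\Res(\lang)$ then gives $\JI(\Res(\lang)) = \JI(J) \subseteq J$, which is therefore orbit-finite, and the inclusion above immediately shows that $\JI(\Res(\lang))$ generates $\Res(\lang)$.

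For $(3) \Rightarrow (1)$, the plan is to take the join-irreducible derivatives themselves as states. Set $Q = \JI(\Res(\lang))$, $I = \{q \in Q \mid q \leq \lang\}$, $F = \{q \in Q \mid \epsilon \in q\}$, and $\delta = \{(q,a,q') \in Q \times \Sigma \times Q \mid q' \leq a^{-1}q\}$. All of these are equivariant (assuming $\lang$ equivariant; otherwise one works relative to $\supp(\lang)$) and orbit-finite, and every $q \in Q$ is by construction a derivative of $\lang$, so it suffices to show $\lang(\aut) = \lang$. I would prove the stronger statement $\lang(q) = q$ for every state $q$ by induction on $|w|$: the base case $w = \epsilon$ holds by the definition of $F$, and in the inductive step for $aw$ the key observation is that $a^{-1}q \in \Res(\lang)$ is the finitely supported join of all join-irreducibles below it (by hypothesis (3)), so $w \in a^{-1}q$ iff $w \in q'$ for some $q' \in Q$ with $q' \leq a^{-1}q$, which by the transition rule and the induction hypothesis is exactly $aw \in \lang(q)$. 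The same trick applied to $\lang$ and $I$ yields $\lang(\aut) = \bigvee I = \lang$.

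The main obstacle is this inductive verification in $(3) \Rightarrow (1)$: the whole argument rests on the fact that every derivative $a^{-1}q$ equals the join of the join-irreducibles lying under it, which is precisely the generation hypothesis in (3) together with the (easy) observation that the set of such join-irreducibles is finitely supported. The orbit-finiteness and equivariance checks for $Q$, $I$, $F$, $\delta$ are routine bookkeeping, but one should be careful that the construction does not use any non-equivariant choice beyond what $\supp(\lang)$ supports.
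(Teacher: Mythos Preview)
Your proposal is correct and follows essentially the same route as the paper: the same generating set $J=\{\lang(q)\mid q\in Q\}$ for $(1)\Rightarrow(2)$, the same appeal to Lemma~\ref{lem:absolute-ji} (with Lemma~\ref{lem:join-of-ji} supplying its hypothesis) for $(2)\Rightarrow(3)$, and the same canonical automaton on $Q=\JI(\Res(\lang))$ with the same induction on word length for $(3)\Rightarrow(1)$. If anything, you are slightly more explicit than the paper in unpacking why the hypothesis $\Res(\lang)\subseteq\clos{\JI(J)}$ of Lemma~\ref{lem:absolute-ji} holds, and in flagging the equivariance assumption on $\lang$.
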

\begin{proof}
We prove three implications:

\paragraph{($1 \Rightarrow 2$)}
Let $\aut \coloneqq (\Sigma,Q,I,F,\delta)$ be a residual automaton accepting $\lang$.
Take the set of languages accepted by the states: $J \coloneqq \{ \stlang{q} \mid q \in \aut \}$.
This is clearly orbit-finite, since $Q$ is.
Moreover, each derivative is generated as follows:
\[ w^{-1}\lang = \bigvee \left\{ \stlang{q} \mid q \in \delta(I, w) \right\} . \]

\paragraph{($2 \Rightarrow 3$)}
We can apply Lemma~\ref{lem:absolute-ji} with $Y = J$ and $X = \Res(\lang)$ \ch{and obtain $\JI(J) = \JI(\Res(\lang))$}.
It follows that $\JI(\Res(\lang))$ is orbit-finite (since it is a subset of $J$) and generates $\Res(\lang)$.

\paragraph{($3 \Rightarrow 1$)}
\ch{Consider the following residual automaton:}
\begin{align*}
Q &\coloneqq \JI(\Res(\lang)) \\
I &\coloneqq \left\{ w^{-1}\lang \in Q \mid w^{-1}\lang \leq \lang \right\} \\
F &\coloneqq \left\{ w^{-1}\lang \in Q \mid \epsilon \in w^{-1}\lang \right\} \\
\delta(w^{-1}\lang, a) &\coloneqq \left\{ v^{-1}\lang \in Q \mid v^{-1}\lang \leq wa^{-1}\lang \right\}
\end{align*}
Note that $\aut \coloneqq (\Sigma,Q,I,F,\delta)$ is a well-defined nominal automaton.
In fact, all the components are orbit-finite, and equivariance of $\leq$ implies equivariance of $\delta$.

\ch{We shall now prove that the language of this automaton is exactly $\lang$. As a first step, we prove that $\stlang{q} = w^{-1}\lang$ by induction over words.}
\begin{align*}
\epsilon \in \stlang{w^{-1}\lang} &\iff w^{-1}\lang \in F \iff \epsilon \in w^{-1}\lang \\
au \in \stlang{w^{-1}\lang}
&\iff u \in \stlang{\delta(w^{-1}\lang, a)} \\
&\iff u \in \stlang{\left\{ v^{-1}\lang \in Q \mid v^{-1}\lang \leq wa^{-1}\lang \right\}} \\
{}^\text{(i)}
&\iff u \in \bigvee \left\{ v^{-1} \lang \in Q \mid v^{-1}\lang \leq wa^{-1}\lang \right\} \\
&\iff \exists v^{-1}\lang \in Q \text{ with } v^{-1}\lang \leq wa^{-1}\lang \text{ and } u \in v^{-1}\lang \\
{}^\text{(ii)}
&\iff u \in wa^{-1}\lang \iff au \in w^{-1}\lang
\end{align*}
At step (i) we have used the induction hypothesis ($u$ is a shorter word than $au$) and the fact that $\stlang{-}$ preserves unions.
At step (ii, right-to-left) we have used that $v^{-1}\lang$ is join-irreducible.
The other steps are unfolding definitions.

Now, note that $\lang = \bigvee \left\{ w^{-1}\lang \mid w^{-1}\lang \leq \lang, w \in
\Sigma^\star \right\}$, since the join-irreducible languages generate all languages.
In other words, the initial states (together) accept $\lang$.
\end{proof}

\begin{corollary}
The construction above defines a \emph{canonical} residual automaton with the following uniqueness property: it has the minimal number of orbits of states and the maximal number of orbits of transitions.
\end{corollary}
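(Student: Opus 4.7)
The plan is to verify the two extremal properties separately, from which uniqueness up to isomorphism will follow. Let $\aut'$ be an arbitrary residual automaton accepting $\lang$, and compare its orbit counts with those of the canonical construction $\aut_c$ of Theorem~\ref{thm:residual-characterisation}.

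For the minimal number of state orbits, the implication $(1 \Rightarrow 2)$ of Theorem~\ref{thm:residual-characterisation} already supplies the key ingredient: the set $J_{\aut'} \coloneqq \{\lang(q) \mid q \in Q_{\aut'}\}$ is an orbit-finite equivariant subset of $\Res(\lang)$ that generates $\Res(\lang)$. Applying Corollary~\ref{cor:smallest} yields $\JI(\Res(\lang)) \subseteq J_{\aut'}$. Since $q \mapsto \lang(q)$ is an equivariant surjection $Q_{\aut'} \twoheadrightarrow J_{\aut'}$, orbit counts satisfy $|\Orb(Q_{\aut'})| \geq |\Orb(J_{\aut'})| \geq |\Orb(\JI(\Res(\lang)))|$, which is exactly the state count of $\aut_c$.

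For the maximal number of transition orbits, I would restrict attention to residual automata whose state set is $\JI(\Res(\lang))$ with each state $x$ carrying the language $x$ itself; the previous paragraph shows this is the canonical shape of a state-minimal automaton. Given such an $\aut'$ and a transition $(w^{-1}\lang, a, v^{-1}\lang) \in \delta_{\aut'}$, every $u \in v^{-1}\lang = \lang(v^{-1}\lang)$ satisfies $au \in \lang(w^{-1}\lang) = w^{-1}\lang$, so $u \in wa^{-1}\lang$. Hence $v^{-1}\lang \leq wa^{-1}\lang$, which is precisely the admissibility condition defining $\delta_c$, giving $\delta_{\aut'} \subseteq \delta_c$. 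The same inclusion-based argument applied to the initial- and final-state conditions yields $I_{\aut'} \subseteq I_c$ and $F_{\aut'} = F_c$, where equality for $F$ is immediate because the defining condition ``$\epsilon \in \lang(q)$'' is semantic.

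Uniqueness up to isomorphism then follows: any residual automaton achieving both extremes must coincide with $\aut_c$ under the canonical identification of its states with their accepted languages. The main obstacle I anticipate is justifying cleanly the reduction to automata with state set exactly $\JI(\Res(\lang))$. One must argue that any state-minimal residual automaton can be identified with one whose language map is bijective—in particular, that two distinct states in the same orbit accepting the same language cannot both be present without either violating minimality or being collapsible without decreasing the transition-orbit count. This is the delicate interplay between the nominal orbit structure of $\JI(\Res(\lang))$ (where Lemma~\ref{lem:minimal-in-poset}(1) ensures a rigid order on orbits) and the language-equivalence quotient on $Q_{\aut'}$, and is where the careful bookkeeping of orbits versus individual states becomes essential.
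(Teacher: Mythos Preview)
Your approach matches the paper's. For state minimality you invoke Corollary~\ref{cor:smallest} exactly as the paper does. For transition maximality the paper says only ``no transitions can be added without changing the language,'' which is the saturation form of the same fact you prove by inclusion ($\delta_{\aut'} \subseteq \delta_c$); your argument is more explicit but equivalent.

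The obstacle you flag---reducing an arbitrary state-minimal residual automaton to one whose state set is literally $\JI(\Res(\lang))$---is real, and the paper simply does not address it: its one-sentence proof tacitly compares only automata on the canonical state set. Your orbit-counting argument ($|\Orb(Q_{\aut'})| \geq |\Orb(J_{\aut'})| \geq |\Orb(\JI(\Res(\lang)))|$) together with the observation that equivariant subsets with equal orbit counts must coincide does force $J_{\aut'} = \JI(\Res(\lang))$, but as you note, the equivariant surjection $q \mapsto \lang(q)$ need not be injective even when domain and codomain have the same orbit count. So the ``uniqueness up to isomorphism'' conclusion requires the extra step of quotienting by language equivalence, which preserves the orbit count of states and can only decrease the orbit count of transitions; this is the bookkeeping you correctly anticipate, and it is omitted in the paper's proof as well.
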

\begin{proof}
State minimality follows from Corollary~\ref{cor:smallest},
where we note that the states of any residual automata accepting $\lang$ form a generating subset of $\Res(\lang)$.
Maximality of transitions follows from the fact that no transitions can be added without changing the language.
\end{proof}

For finite alphabets, the classes of languages accepted by DFAs and NFAs are the same (by determinising an NFA).
This means that $\Res(\lang)$ is always finite if $\lang$ is accepted by an NFA, and we can always construct the canonical RFSA\@.
Here, this is not the case, that is why we need to stipulate (in~Theorem~\ref{thm:residual-characterisation}) that the set $\JI(\Res(\lang))$ is orbit-finite \emph{and} actually generates $\Res(\lang)$.
Either condition may fail, as we will see in Example~\ref{ex:non-residual}.

\begin{example}
In this example we show that residual automata can also be used to compress deterministic automata.
The language $\lang \coloneqq \{ a b b \ldots b \mid a \neq b \}$ can be accepted by a deterministic automaton of 4 orbits, and this is minimal.
(A zero amount of $b$s is also accepted in $\lang$.)
The minimal residual automaton, however, has only 2 orbits, given by the join-irreducible languages:
\begin{align*}
\epsilon^{-1} \lang &= \{ a b b \ldots b \mid a \neq b \} \\
ab^{-1} \lang &= \{ b b \ldots b \} \qquad (a,b \in \atoms \text{ distinct})
\end{align*}
The trick in defining the automaton is that the $a$-transition from $\epsilon^{-1} \lang$ to $ab^{-1} \lang$ \emph{guesses} the value $b$.
In the next section (Section~\ref{sec:non-guessing}), we will define the canonical \emph{non-guessing} residual automaton, which has 3 orbits.
\end{example}

\begin{example}\label{ex:non-residual}
We return to the examples $\lang_\up{n}$ and $\lang_\up{ng}$ from Section~\ref{sec:examples}.
We claim that neither language can be accepted by a residual automaton.

\renewcommand{\currlang}{\lang_\up{n}}
For $\lang_\up{n}$ we note that there is an infinite descending chain of derivatives
\[ \currlang > a^{-1}\currlang > ab^{-1}\currlang > abc^{-1}\currlang > \cdots \]
Each of these languages can be written as a union of smaller derivatives.
For instance, $a^{-1}\currlang = \bigcup_{b \neq a} ab^{-1}\currlang$.
This means that $\JI(\Res(\currlang)) = \emptyset$, hence it does not generate $\Res(\currlang)$ and by Theorem~\ref{thm:residual-characterisation} there is no residual automaton.

\renewcommand{\currlang}{\lang_\up{ng}}
In the case of $\currlang$, we have an infinite ascending chain
\begin{equation}
\currlang < a^{-1}\currlang < ba^{-1}\currlang < cba^{-1}\currlang < \cdots
\label{eq:chain}
\end{equation}
This in itself is not a problem: the language $\lang_\up{ng,r}$ also has an infinite ascending chain.
However, for $\currlang$, none of the languages in this chain are a union of smaller derivatives, which we shall now prove formally.
\begin{claim}
All the languages in \eqref{eq:chain} are join-irreducible.
\end{claim}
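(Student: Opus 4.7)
The plan is to prove join-irreducibility by exhibiting, for each $n$, a single distinguishing word $w_n \in u_n^{-1}\currlang$ that lies in no strictly smaller derivative, where $u_n = a_n a_{n-1} \cdots a_1$ denotes the $n$-th word of the chain with distinct atoms $a_i$. The key insight guiding the construction is that $u_n^{-1}\currlang$ ``records'' the specific ordered sequence of atoms $a_n, a_{n-1}, \ldots, a_1$, and no strictly smaller derivative can mimic this record.

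First I would carry out a case analysis of when $v w \in \currlang$, aiming to express $v^{-1}\currlang$ purely in terms of the last letter $e(v) = v[|v|]$ and the set of adjacent pairs $N(v) = \{(v[p], v[p+1]) : 1 \le p < |v|\}$ (for $v \notin \currlang$ with $v \neq \epsilon$). The defining pattern of $\currlang$ in $vw$ either sits entirely in $w$, entirely in $v$, or straddles the $v|w$-boundary; in the straddling case one further splits on whether the first distinguished position lies at index $|v|$ or earlier. Each branch yields a clean condition coming from an element of $N(v)$ or from $e(v)$.

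The main obstacle is then to translate this characterisation into a rigid structural constraint on $v$ whenever $v^{-1}\currlang \subseteq u_n^{-1}\currlang$. I would show that the adjacency pairs of $v$ must lie in $\{(a_i, a_{i-1}) : 2 \le i \le n\}$ and that its last letter must be $a_1$; together these pin $v$ down to $\epsilon$ or to a prefix $u_k$ of $u_n$ for some $k \le n$. The delicate point is ruling out ``decoy'' $v$ that use permuted atoms or fresh atoms outside $\{a_1, \ldots, a_n\}$; this uses the fact that the $a_i$ appear in $u_n$ in exactly one linear order, so any deviation introduces a forbidden pair or a wrong terminal letter. From this one concludes that the strictly smaller derivatives form the finite chain $\currlang \subsetneq u_1^{-1}\currlang \subsetneq \cdots \subsetneq u_{n-1}^{-1}\currlang$.

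Finally, I would take $w_n = a_n f$ with $f$ any atom outside $\{a_1, \ldots, a_n\}$. The word $u_n w_n$ will contain the defining pattern witnessed by the two occurrences of $a_n$ together with the fresh trailing letter $f$, so $w_n \in u_n^{-1}\currlang$. On the other hand, for each $k < n$ the word $u_k w_n = a_k \cdots a_1 a_n f$ consists of $k{+}2$ pairwise distinct letters and therefore admits no repeated atom and no pattern, giving $w_n \notin u_k^{-1}\currlang$; likewise $w_n \notin \currlang$ since $|w_n| = 2$. Any finitely supported family $\mathfrak{x}$ of strictly smaller derivatives is therefore contained in $u_{n-1}^{-1}\currlang$, which excludes $w_n$; hence $\bigvee \mathfrak{x} \subsetneq u_n^{-1}\currlang$, establishing join-irreducibility.
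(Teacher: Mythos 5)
Your overall strategy is the same as the paper's: characterise the derivatives contained in $u_n^{-1}\lang_\up{ng}$ as exactly the suffix-derivatives $\epsilon^{-1}\lang_\up{ng}, u_1^{-1}\lang_\up{ng}, \ldots, u_n^{-1}\lang_\up{ng}$ (via excluding fresh atoms, constraining the adjacent pairs to $(a_i,a_{i-1})$, and constraining the ending), and then defeat any join of strictly smaller derivatives with one separating word. Your endgame is sound for $n \geq 2$: with $f$ fresh, $u_nw_n = a_n \cdots a_1 a_n f$ is in the language while $u_kw_n$ is not, so $w_n = a_nf$ works just as well as the paper's witness $a_na_n$ appended to the chain word. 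But there is a genuine flaw in your central tool: $v^{-1}\lang_\up{ng}$ is \emph{not} determined by the last letter $e(v)$ and the pair set $N(v)$. The defining pattern is anchored at the \emph{end} of the word --- the set-builder $uabvac$ (with $b \neq c$, as the automaton and the paper's proof dictate) has no trailing $\atoms^*$, and the accepting state of $\aut_\up{ng}$ has no self-loop --- so a word belongs to $\lang_\up{ng}$ iff its penultimate letter $a$ recurs at distance at least $2$ earlier with successor different from the last letter. Positional information near the end of $v$ therefore matters: $v_3 = baa$ and $v_4 = baaa$ have the same pair set $\{(b,a),(a,a)\}$ and the same last letter $a$, yet for fresh $f$ we get $baaaf \in \lang_\up{ng}$ (the $a$ at position $2$ has successor $a \neq f$ and sits at distance $2$ from the penultimate $a$), while $baaf \notin \lang_\up{ng}$ (the only earlier $a$ is adjacent to the penultimate one). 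Hence $f \in v_4^{-1}\lang_\up{ng} \setminus v_3^{-1}\lang_\up{ng}$, and the clean invariant you planned to build the containment analysis on does not exist.

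The gap is repairable, and the repair is precisely what the paper does: since you only need the implication from $v^{-1}\lang_\up{ng} \subseteq u_n^{-1}\lang_\up{ng}$ to structural constraints on $v$, it suffices to exhibit, for each offending $v$, one concrete word in $v^{-1}\lang_\up{ng} \setminus u_n^{-1}\lang_\up{ng}$ --- in the paper's indexing: $aaa_0$ when $v$ ends in a fresh atom $a$, $aa$ or $aa_0$ when a fresh atom has a successor, $a_ia_ia_{i-1}$ to exclude wrong single letters and wrong endings --- rather than deriving them from a complete description of derivatives. Two smaller points: your conclusion should read that $v$ is a \emph{suffix} of $u_n$ (you write ``prefix'', though your subsequent chain makes the intent clear), and your witness fails at $n = 1$, since $u_1w_1 = a_1a_1f$ has length $3 < 4$ and so cannot contain the pattern; the paper's proof likewise treats only chain words of length at least $2$, which is all the claim needs.
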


\begin{proof}
Consider the word $w = a_k \ldots a_1 a_0$ with $k \geq 1$ and all $a_i$ distinct atoms.
We will prove that $w^{-1} \currlang$ is join-irreducible in $\Res(\currlang)$, by considering all $u^{-1} \currlang \subseteq w^{-1} \currlang$.

Observe that if $u$ is a suffix of $w$, then $u^{-1} \currlang \subseteq w^{-1} \currlang$.
This is easily seen from the given automaton, since it may skip any prefix.
We now show that $u$ being a suffix of $w$ is also a necessary condition.

\ch{Assume that $u$ is not a suffix of $w$, so there is an $i \geq 0$ with $x \neq a_i$ and $u$ contains the suffix $x a_{i-1} \ldots a_0$.
Take a fresh atom $a_{-1}$.
If $x = a_k$ for some $k$, let $c \coloneqq a_{k-1}$ (note that we may use $a_{-1}$ here) and otherwise let $c$ be fresh.
Then $a_{-1} x c$ is in $u^{-1}\lang$, since we have repeated $x$ with a different successor.
However, regarding $w^{-1}\lang$: If $x$ does not occur in $w$, then $c$ is fresh and $a_{-1} x c$ is clearly not in $w^{-1}\lang$ (all atoms are distinct).
If $x = a_k$ (and so $c = a_{k-1}$), then $w a_{-1} a_k a_{k-1}$ mentions only $a_k$ and $a_{k-1}$ twice, but not with distinct successors; hence $a_{-1} x c \notin w^{-1}\lang$.
We conclude that if $u$ is not a suffix of $w$, then $u^{-1}\lang$ is not a subset of $w^{-1}\lang$.}

So far, we have shown that
\[ \{ u \mid u^{-1} \currlang \subseteq w^{-1} \currlang \} = \{ u \mid u~\text{is a suffix of}~w\}. \]
To see that $w^{-1} \currlang$ is indeed join-irreducible, we consider the join $X = \bigvee \{ u^{-1} \currlang \mid u~\text{is a strict suffix of}~w \}$.
Note that $a_k a_k \notin X$, but $a_k a_k \in w^{-1} \currlang$.
We conclude that $w^{-1} \currlang \neq \bigvee \{ u^{-1} \currlang \mid u^{-1} \currlang \subsetneqq w^{-1} \currlang \}$ as required.
\end{proof}
This result implies that the set $\JI(\Res(\currlang))$ is \emph{not orbit-finite}.
By Theorem~\ref{thm:residual-characterisation}, we can conclude that there is no residual automaton accepting $\currlang$.

\end{example}

\begin{remark}
For arbitrary (nondeterministic) languages there is also a characterisation in the style of Theorem~\ref{thm:residual-characterisation}.
Namely, $\lang$ is accepted by an automaton iff there is an orbit-finite set $Y \subseteq \Powfs(\Sigma^*)$ which generates the derivatives.
However, note that the set $Y$ need not be a subset of the set of derivatives.
In these cases, we do not have a canonical construction for the automaton.
Different choices for $Y$ define different automata and there is no way to pick $Y$ naturally.
\end{remark}

\subsection{Automata without guessing}
\label{sec:non-guessing}

We reconsider the above results for non-guessing automata.
Nondeterminism in nominal automata allows naturally for guessing, meaning that the automaton may store symbols in registers without explicitly reading them.
\ch{For instance, in Figure~\ref{fig:nondet} the automaton non-deterministically stores a(ny) symbol in the initial state without actually reading it, and by doing so it ``guesses'' which symbol will be read at the end of a word.}
The original definition of register automata in~\cite{KaminskiF94} does not allow for guessing, and non-guessing automata remain actively researched~\cite{MottetQ19}.
Register automata with guessing were introduced in~\cite{KaminskiZ10}, because it was realised that non-guessing automata are not closed under reversal.

\ch{
To adapt our theory to non-guessing automata, we need to introduce a more restricted form of powerset.
We say that $U \subseteq X$ is \emph{uniformly finitely supported} (ufs in short) if $\cup_{x \in U} \supp(x)$ is finite.
The \emph{ufs powerset} is defined as follows:
\[
	\Powufs(X) = \{ U \subseteq X \mid U \; \text{is uniformly finitely supported} \}
\]
This too comes with its notion of ufs-join, performing the union of ufs sets.

The key insight for this section is that the constraints on supports for non-guessing automata (see Definition~\ref{def:nom-aut}) imply that the transition relation can be expressed as a function of the form $\delta \colon Q \times \Sigma \to \Powufs(Q)$.
Intuitively, whenever a symbol $a$ is read from a state $q$, all successor states must have support that is at most that of $q$ plus that of $a$, which implies that the union of their supports is finite (i.e., they form a ufs set).
The consequence of shifting from $\Powfs$ to $\Powufs$ is that, when giving a specialised version of Theorem~\ref{thm:residual-characterisation} for non-guessing automata, we can consider the join-semilattice structure given by ufs sets and ufs unions. We first characterise join-irreducibles for such join-semilattices.}

\begin{definition}
Let $X \subseteq \Powfs(Z)$ be equivariant and $x \in X$, we say that $x$ is \ch{\emph{ufs-join-irreducible in $X$}} if
$x = \bigvee \mathfrak{x} \implies x \in \mathfrak{x}$,
for every finitely supported $\mathfrak{x} \subseteq X$ such that $\supp(x_0) \subseteq \supp(x)$, for each $x_0 \in \mathfrak{x}$.
The set of all \ch{ufs-join-irreducible} elements is denoted by
\[
\ch{\JIufs(X) \coloneqq \left\{ x \in X \mid x \text{ ufs-join-irreducible in } X \right\}.}
\]
\end{definition}
The only change required is an additional condition on the elements and supports in $\mathfrak{x}$.
In particular, the sets $\mathfrak{x}$ are ufs sets, hence their union is ufs.

All the lemmas from the previous section are proven similarly.
We state the main result for non-guessing automata.

\begin{theorem}
\label{thm:non-guessing-residuals}
Given a language $\lang \subseteq \Powfs(\Sigma^*)$, the following are equivalent:
\begin{enumerate}
\item
$\lang$ is accepted by a non-guessing residual automaton.
\item
There is some orbit-finite set $J \subseteq \Res(\lang)$ which generates $\Res(\lang)$ by ufs unions.
\item
The set $\JIufs(\Res(\lang))$ is orbit-finite and generates $\Res(\lang)$ by ufs unions.
\end{enumerate}
\end{theorem}
\begin{proof}[Proof Sketch]
The proof is similar to that of Theorem~\ref{thm:residual-characterisation}, we briefly sketch each direction.

\ch{For direction ($1 \Rightarrow 2$), we observe that, for a residual non-guessing automaton $\aut \coloneqq (\Sigma,Q,I,F,\delta)$, we have $\supp(q) \subseteq \supp(w)$, for $q \in \delta(I,w)$. Since $\supp(\stlang{q}) \subseteq \supp(q)$, the set $\left\{ \stlang{q} \mid q \in \delta(I, w) \right\}$ is ufs, and its ufs union gives $w^{-1}\lang$.

For direction ($2 \Rightarrow 3$), it is easy to see that Lemma~\ref{lem:absolute-ji} applies to generation via ufs unions, taking $\JIufs(-)$ as join-irreducibles. Therefore we have $\JIufs(J) = \JIufs(\Res(\lang))$.}

For direction ($3 \Rightarrow 1)$ we need a slightly different definition of the canonical automaton:
\begin{align*}
Q &\coloneqq \JIufs(\Res(\lang)) \\
I &\coloneqq \left\{ w^{-1}\lang \in Q \mid w^{-1}\lang \leq \lang, \supp(w^{-1}\lang) \subseteq \supp(\lang) \right\} \\
F &\coloneqq \left\{ w^{-1}\lang \in Q \mid \epsilon \in w^{-1}\lang \right\} \\
\delta(w^{-1}\lang, a) &\coloneqq \left\{ v^{-1}\lang \in Q \mid v^{-1}\lang \leq wa^{-1}\lang , \supp(v^{-1}\lang) \subseteq \supp(wa^{-1}\lang) \right\}
\end{align*}
\ch{
The fact that this automaton accepts $\lang$ can be proven similarly to what done for Theorem~\ref{thm:residual-characterisation}.
We need to show that this automaton is indeed non-guessing, namely:
\begin{enumerate}
	\item $\supp(I) = \emptyset$;
	\item $v^{-1}\lang \subseteq \supp(a) \cup \supp(w^{-1}\lang)$, for each $v^{-1}\lang \in \delta(w^{-1}\lang, a)$.
\end{enumerate}
The first condition follows from $\lang$ being equivariant. For the second one, we have
\begin{align*}
	\supp(v^{-1}\lang) \; \stackrel{\text{(i)}}{\subseteq} \; \supp(wa^{-1}\lang) \; \stackrel{\text{(ii)}}{\subseteq} \; \supp(a) \cup \supp(w^{-1}\lang)
\end{align*}
where (i) is by the definition of $\delta(w^{-1}\lang, a)$ and (ii) follows by equivarience of the function $(w^{-1}\lang,a) \mapsto  wa^{-1}\lang \colon \Res(\lang) \times \Sigma  \to \Res(\lang)$ (see, e.g., \cite[Lemma 2.12]{Pitts13}).
}
\end{proof}

To better understand the structure of the canonical non-guessing residual automaton, we recall the following fact.
\begin{lemma}\label{lem:finite-powerset}
For orbit-finite nominal sets $Q$, we have $\Powufs(Q) = \Powfin(Q)$.
\end{lemma}

As a consequence, the transition function of non-guessing automata can be written as $\delta \colon Q \times \Sigma \to \Powfin(Q)$. This shows that the canonical non-guessing residual automaton has finite nondeterminism.
It also shows that it is sufficient to consider \emph{finite unions} in Theorem~\ref{thm:non-guessing-residuals}, instead of uniformly supported ones.

\section{Decidability and Closure Results}
\label{sec:dec}

In this section we investigate decidability and closure properties of residual automata.
First, a positive result: universality is decidable for residual automata.
This is in contrast to the nondeterministic case, where universality is undecidable, even for non-guessing automata~\cite{Bojanczyk19}.

In the constructions below, we use \emph{computation with atoms.}
This is a computation paradigm which allow algorithmic manipulation of infinite --- but orbit-finite --- nominal sets. For instance, it allows looping over such a set in finite time.
Important here is that this paradigm is equivalent to regular computability (see~\cite{BojanczykT18}) and implementations exist to compute with atoms~\cite{KlinS16,KopczynskiT17}.

\begin{proposition}\label{prop:decidability-universality}
Universality for residual nominal automata is decidable.
Formally: given a residual automaton $\aut$, it is decidable whether $\lang(\aut) = \Sigma^*$.
\end{proposition}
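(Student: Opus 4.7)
The plan is to reduce the semantic question $\lang(\aut) = \Sigma^*$ to a purely structural check on $\aut$ that can be evaluated by orbit-finite iteration. Specifically, I claim that for a residual automaton $\aut$, $\lang(\aut) = \Sigma^*$ if and only if (a) $I \neq \emptyset$, (b) every state is accepting, i.e.\ $F = Q$, and (c) $\delta(q, a) \neq \emptyset$ for all $q \in Q$ and $a \in \Sigma$. Each of these is decidable by a finite iteration over orbit-finite sets, using the computation-with-atoms paradigm~\cite{BojanczykT18}.

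For the forward direction I would use residuality in an essential way. The only derivative of $\Sigma^*$ is $\Sigma^*$ itself, since $w^{-1}\Sigma^* = \Sigma^*$ for every $w$. So if $\lang(\aut) = \Sigma^*$, then every state $q$, whose language must be a derivative of $\lang(\aut)$, satisfies $\lang(q) = \Sigma^*$. This immediately forces $q \in F$ (because $\epsilon \in \lang(q)$) and $\delta(q, a) \neq \emptyset$ for every letter $a$ (because $a \in \lang(q)$ requires a successor, which is then automatically accepting). Non-emptiness of $I$ is clear, since otherwise $\lang(\aut) = \emptyset$.

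The converse does not need residuality: if (a)--(c) hold, then given any word $w = a_1 \cdots a_n$, starting at some $q_0 \in I$ we inductively pick $q_i \in \delta(q_{i-1}, a_i)$ using (c), and (b) ensures $q_n \in F$, so $w$ is accepted; the word $\epsilon$ is accepted directly by (a) and (b). Hence $\lang(\aut) \supseteq \Sigma^*$, and equality follows.

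I do not foresee a genuine obstacle here; the key insight is that residuality collapses universality into the statement that \emph{every} individual state is universal, which in turn is equivalent to the finite structural check above. Algorithmically, testing conditions (a), (b), and (c) amounts to emptiness tests and universal quantification over the orbit-finite sets $I$, $Q$, and $Q \times \Sigma$, all of which are effective under computation with atoms. This gives the desired decision procedure and completes the proof.
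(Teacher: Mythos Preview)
Your proposal is correct and matches the paper's proof essentially verbatim: the paper checks exactly the three conditions $I \neq \emptyset$, $F = Q$, and $\delta(q,a) \neq \emptyset$ for all $q,a$, and justifies them using residuality in the same way (a non-final state or a missing transition yields, via a characterising word, a concrete word outside $\lang(\aut)$). The only cosmetic difference is that the paper phrases the forward direction contrapositively, exhibiting the rejected word in each failing case, whereas you argue directly that every state must accept $\Sigma^*$; both are the same observation.
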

\begin{proof}
We will sketch an algorithm that, given a residual automaton $\aut$, answers whether $\lang(\aut) = \Sigma^*$.
The algorithm decides \emph{negatively} in the following cases:
\begin{itemize}
\item
$I = \emptyset$.
In this case the language accepted by $\aut$ is empty.
\item
Suppose there is a $q \in Q$ with $q \notin F$.
By residuality we have $\stlang{q} = w^{-1}\lang(\aut)$ for some $w$.
Note that $q$ is not accepting, so that $\epsilon \notin w^{-1}\lang(\aut)$.
Put differently: $w \notin \lang(\aut)$.
(We note that $w$ is not used by the algorithm. It is only needed for the correctness.)
\item
Suppose there is a $q \in Q$ and $a \in \Sigma$ such that $\delta(q, a) = \emptyset$.
Again $\stlang{q} = w^{-1}\lang(\aut)$ for some $w$.
Note that $a$ is not in $\stlang{q}$.
This means that $wa$ is not in the language.
\end{itemize}
When none of these three cases hold, the algorithm decides \emph{positively}.
We shall prove that this is indeed the correct decision.
If none of the above conditions hold, then $I \neq \emptyset$, $Q = F$, and for all $q \in Q, a \in \Sigma$ we have $\delta(q, a) \neq \emptyset$.
Here we can prove that the language of each state is $\stlang{q} = \Sigma^*$.
Given that there is an initial state, the automaton accepts $\Sigma^*$.

Note that the operations on sets performed in the above cases all terminate, because all involve orbit-finite sets.
\end{proof}

Next we consider equivalence of residual automata and checking whether an automaton is residual.
Both will turn out to be undecidable and we use following construction in order to prove this.

\begin{construction}
Let $\aut = (\Sigma, Q, I, F, \delta)$ be a nondeterministic automaton.
Let
\[ \Sigma' \coloneqq \Sigma \cup \{ \underline{q} \mid q \in Q \} \cup \{ q \mid q \in Q \} \]
be an extended alphabet, where we assume the new symbols $\underline{q}$ and $q$ to be disjoint from $\Sigma$.
We now construct two residual automata from $\aut$, where those symbols are used as anchors:
\begin{align*}
\anc{\aut} &= (\Sigma', \anc{Q}, \anc{I}, \anc{F}, \anc{\delta}) \text{, where}
      & \tp{\aut} &= (\Sigma', \tp{Q}, \tp{I}, \tp{F}, \tp{\delta}) \text{, where} \\
\anc{Q} &= Q \cup \{ \underline{q} \mid q \in Q \}
      & \tp{Q} &= \anc{Q} \cup \{ \top \} \\
\anc{I} &= I \cup \{ \underline{q} \mid q \in Q \}
      & \tp{I} &= \{ \top \} \cup \{ \underline{q} \mid q \in Q \} \\
\anc{F} &= F
      & \tp{F} &= F \cup \{ \top \} \\
\anc{\delta} &= \delta \cup \{ (\underline{q}, \underline{q}, \underline{q}) \mid q \in Q \}
      & \tp{\delta} &= \anc{\delta} \cup \{ (\top, a, \top) \mid a \in \Sigma \} \\
 & \quad \cup \{ (\underline{q}, q, q) \mid q \in Q \}
\end{align*}
Note that these constructions are effective, as they involve computations over orbit-finite sets.
We observe the following facts about $\anc{\aut}$ and $\tp{\aut}$:
\begin{enumerate}
\item
The states $\underline{q}$ and $q$ (in both automata) are anchored by the words $\underline{q}$ and $q$ respectively.
Moreover, any symbol from the original alphabet $a \in \Sigma$ is a characterising word for the state $\top$ in $\tp{\aut}$.
So we conclude that both automata are residual.\label{c1}
\item
For $q \in Q$ we note that the languages $\stlang{\anc{q}}$ and $\stlang{\tp{q}}$ are the same (where $\anc{q} \in \anc{Q}$ and $\tp{q} \in \tp{Q}$ denote the ``same'' state).
Similarly we have $\stlang{\anc{\underline{q}}} = \stlang{\tp{\underline{q}}}$.\label{c2}
\item
If we restrict the alphabet to the original alphabet, we get
\[
\lang(\anc{\aut}) \cap \Sigma^* = \lang(\aut) \qquad \text{and} \qquad
\lang(\tp{\aut}) \cap \Sigma^* = \Sigma^* .
\]
\label{c3}
\end{enumerate}
\label{cons:res}
\end{construction}

\begin{proposition}\label{prop:undecidability-equivalence}
Equivalence of residual automata is undecidable.
\end{proposition}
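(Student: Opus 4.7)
The plan is to reduce from universality of nondeterministic nominal automata, which is undecidable (as noted just before Proposition~\ref{prop:decidability-universality}, citing~\cite{Bojanczyk19}). Given an arbitrary nondeterministic automaton $\aut$ over $\Sigma$, I would apply Construction~\ref{cons:res} to effectively produce the two residual automata $\anc{\aut}$ and $\tp{\aut}$ over the extended alphabet $\Sigma'$. By property~(\ref{c1}) of the construction, both automata are residual, so the question whether $\lang(\anc{\aut}) = \lang(\tp{\aut})$ holds is an instance of the equivalence problem for residual automata, and the construction itself is computable by the ``computation with atoms'' paradigm already invoked in Proposition~\ref{prop:decidability-universality}.

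The central claim I would establish is that $\lang(\anc{\aut}) = \lang(\tp{\aut})$ if and only if $\lang(\aut) = \Sigma^*$. The left-to-right direction is immediate from property~(\ref{c3}): intersecting both sides with $\Sigma^*$ yields $\lang(\aut) = \Sigma^*$. For the converse, assuming universality of $\aut$, I would decompose each language according to its initial states. In $\anc{\aut}$, runs starting from an original initial state in $I$ can only use transitions in $\delta$, since original states have no outgoing transitions on the new symbols; these contribute exactly $\lang(\aut) = \Sigma^*$. In $\tp{\aut}$, the state $\top$ has only $\Sigma$-self-loops and is final, so it also contributes $\Sigma^*$. For the remaining initial states of the form $\underline{q}$, which are common to both automata, the languages agree by property~(\ref{c2}); a direct inspection of the transitions shows that each takes the shape $\underline{q}^* \cdot q \cdot \lang(q)$, where $\lang(q)$ denotes the language of $q$ in $\aut$. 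Taking the union over the respective initial states then gives $\lang(\anc{\aut}) = \lang(\tp{\aut})$.

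I do not foresee a substantive obstacle, since Construction~\ref{cons:res} has been engineered precisely so that the ``extra'' non-$\Sigma$ parts of the two languages coincide while residuality is enforced via the single-symbol anchors $\underline{q}$, $q$ and $a \in \Sigma$. The only subtlety to verify is that, when restricting languages to $\Sigma^*$, the union components $\underline{q}^* \cdot q \cdot \lang(q)$ contribute nothing; this is automatic because such words necessarily contain the symbol $q$, which by construction lies outside $\Sigma$. Hence the reduction carries universality of $\aut$ to equivalence of the two residual automata, and Proposition~\ref{prop:undecidability-equivalence} follows.
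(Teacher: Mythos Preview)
Your proposal is correct and follows essentially the same approach as the paper: both reduce from universality of nondeterministic nominal automata via Construction~\ref{cons:res}, and both establish the biconditional $\lang(\anc{\aut}) = \lang(\tp{\aut}) \iff \lang(\aut) = \Sigma^*$ using properties~(\ref{c1})--(\ref{c3}). Your forward direction is slightly more explicit than the paper's (you decompose the languages over initial states and compute $\lang(\underline{q}) = \underline{q}^* \cdot q \cdot \lang(q)$, whereas the paper simply appeals to~(\ref{c2}) for the anchor part), but this is a presentational difference, not a mathematical one.
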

\begin{proof}
We show undecidability by reducing the universality problem for nondeterministic automata to the equivalence problem.
(Note that a reduction from universality of residual automata will not work as that is decidable.)
We use the above construction and prove that $\aut$ is universal if and only if $\anc{\aut}$ and $\tp{\aut}$ are equivalent.

Suppose $\lang(\aut) = \Sigma^*$, then we get $\lang(\anc{\aut}) \cap \Sigma^* = \lang(\aut) = \Sigma^* = \lang(\tp{\aut}) \cap \Sigma^*$ by (\ref{c3}).
When considering all words on $\Sigma'^*$, we note that the anchors will lead to single states $q$ which accept the same languages by (\ref{c2}).
So $\lang(\anc{\aut}) = \lang(\tp{\aut})$ as required.

Conversely, suppose that $\lang(\anc{\aut}) = \lang(\tp{\aut})$.
Then by (\ref{c3}) we can conclude that
\[ \lang(\aut) = \lang(\anc{\aut}) \cap \Sigma^* = \lang(\tp{\aut}) \cap \Sigma^* = \Sigma^* . \]
We conclude that we can decide universality of nondeterministic automata via equivalence of residual automata.
So equivalence of residual automata is undecidable.
\end{proof}

Last, determining whether an automaton is actually residual is undecidable.
In other words, residuality cannot be characterised as a syntactic property.
This adds value to learning techniques, as they are able to provide automata that are residual by construction.

\begin{proposition}\label{prop:undecidability-residuality}
The problem of determining whether a given nondeterministic nominal automaton is residual is undecidable.
\end{proposition}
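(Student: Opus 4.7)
The plan is to reduce from universality of nondeterministic nominal automata, which is undecidable by~\cite{Bojanczyk19}. Given such an $\aut$ over $\Sigma$, I would build an automaton $\aut''$ by grafting features of $\tp{\aut}$ onto $\anc{\aut}$ from Construction~\ref{cons:res}, and show that $\aut''$ is residual if and only if $\lang(\aut) = \Sigma^*$. Starting from $\anc{\aut}$ I would adjoin two fresh non-initial accepting states. The first, $\top$, carries the self-loops $(\top, a, \top)$ for every $a \in \Sigma$, exactly as in $\tp{\aut}$. The second, $q_?$, mirrors the outgoing behaviour of $\tp{I}$ on the first letter: $q_? \xrightarrow{a} \top$ for each $a \in \Sigma$, together with $q_? \xrightarrow{\underline{q}} \underline{q}$ and $q_? \xrightarrow{q} q$ for each $q \in Q$. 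Since nothing reaches $\top$ or $q_?$ from the initial set $\anc{I}$, both are unreachable, so $\lang(\aut'') = \lang(\anc{\aut})$, while a short unfolding yields $\lang(\top) = \Sigma^*$ and $\lang(q_?) = \Sigma^* \cup \bigcup_{q \in Q} \underline{q}^* q \lang(q) = \lang(\tp{\aut})$.

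Every state of $\anc{\aut}$ remains anchored in $\aut''$ by its original witnessing word, so residuality of $\aut''$ collapses to whether both $\Sigma^*$ and $\lang(\tp{\aut})$ lie in $\Res(\lang(\anc{\aut}))$. The crucial equivalence I would prove is that $\lang(\tp{\aut}) \in \Res(\lang(\anc{\aut}))$ iff $\lang(\aut) = \Sigma^*$. The forward direction is immediate, because $\lang(\anc{\aut})$ and $\lang(\tp{\aut})$ coincide on all words containing an anchor symbol and differ only on $\Sigma^* \setminus \lang(\aut)$; universality collapses the two, giving $\lang(\tp{\aut}) = \epsilon^{-1}\lang(\anc{\aut})$, and it simultaneously makes $\Sigma^*$ a derivative of $\lang(\anc{\aut})$ via any single $\Sigma$-letter, which takes care of $\top$ as well.

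The main obstacle is the converse: showing that when $\lang(\aut) \subsetneq \Sigma^*$, no word $w$ in the extended alphabet of Construction~\ref{cons:res} makes $w^{-1}\lang(\anc{\aut})$ equal to $\lang(\tp{\aut})$. I would treat this by a case analysis on the shape of $w$. A non-empty $w \in \Sigma^{+}$ gives the derivative $w^{-1}\lang(\aut) \subseteq \Sigma^*$, because the anchor states carry no $\Sigma$-transitions; reading $\underline{q}^{k}$ with $k \geq 1$ yields $\lang(\underline{q}) = \underline{q}^{*} q \lang(q)$; reading $\underline{q}^{k} q w'$ with $w' \in \Sigma^*$ produces $w'^{-1}\lang(q) \subseteq \Sigma^*$; and any word that mixes two anchor orbits or interleaves anchors with $\Sigma$-letters dead-ends in the empty derivative. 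None of these families can simultaneously realise the full $\Sigma^*$ component together with the anchor parts for every orbit that appear in $\lang(\tp{\aut})$, so the only candidate matching $\lang(\tp{\aut})$ is the $\epsilon$-derivative $\lang(\anc{\aut})$ itself, and this matches exactly when $\aut$ is universal. Putting the cases together, $\aut''$ is residual iff $\aut$ is universal, which transports the undecidability of universality to residuality.
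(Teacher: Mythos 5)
Your reduction is correct, but it takes a genuinely different route from the paper's proof. The paper also reduces from universality, yet it keeps every state reachable: it extends the alphabet by two fresh symbols $\$$ and $\#$, adds three states $x, y, z$ (with $x, y$ initial) to $\anc{\aut}$, arranges $\lang(z) = \Sigma^* = \$^{-1}\lang(\aut')$ and $\lang(x) = \#^{-1}\lang(\aut')$, and then shows that residuality of the single test state $y$ (which satisfies $\lang(y) \subseteq \lang(x)$, with equality iff $\aut$ is universal) forces its characterising word to have the form $\#^k$, hence $\lang(y) = \lang(x)$. You instead graft two \emph{unreachable} accepting states onto $\anc{\aut}$, so that $\lang(\aut'') = \lang(\anc{\aut})$ is untouched and the entire burden shifts to classifying $\Res(\lang(\anc{\aut}))$: the derivatives are $\lang(\anc{\aut})$ itself, subsets of $\Sigma^*$, languages $\underline{q}^{*} q \lang(q)$, and $\emptyset$, so the only candidate witness for $q_?$ is $w = \epsilon$, forcing universality. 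Each approach buys something. Yours needs no alphabet symbols beyond $\Sigma'$ and is a total reduction: it even handles $\lang(\aut) = \emptyset$ correctly, whereas the paper's converse direction explicitly assumes non-emptiness (``Provided that $\lang(\aut)$ is not empty\ldots''), which strictly requires a decidable emptiness pre-check to complete the reduction. Conversely, because your hardness rests entirely on unreachable states, your proof establishes undecidability only for arbitrary automata and says nothing about automata in which all states are reachable; since trimming is effective for orbit-finite automata, one could still hope residuality is decidable on trim automata, and your construction cannot refute that, while the paper's (where $x, y$ are initial and $z$, $q$, $\underline{q}$ are reachable) does. One small patch you should record: in the degenerate case where $\lang(q) = \emptyset$ for every $q \in Q$, the language $\lang(\tp{\aut}) = \Sigma^*$ has no anchor part, so your ``cannot simultaneously realise the $\Sigma^*$ component together with the anchor parts'' argument is vacuous there; but then $\lang(\anc{\aut}) = \emptyset$, every derivative is empty, and the mismatch with $\Sigma^* \ni \epsilon$ is immediate, so the conclusion stands.
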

\begin{proof}
The construction is inspired by~\cite[Proposition~8.4]{DenisLT02}.\footnote{They prove that checking residuality for NFAs is \textsc{PSpace}-complete via a reduction from universality. Instead of using NFAs, they use a union of $n$ DFAs. This would not work in the nominal setting.}
We show undecidability by reducing the universality problem for nominal automata to the residuality problem.

\begin{figure}\centering
\begin{tikzpicture}[automaton, y=1cm]

\begin{scope}[color=black,local bounding box=A]
\node[state] (op) at (4, -2) {$p$};
\node[state] (oq) at (4.5, -0.5) {$q$};
\node (a1) at (2.8, -0.7) {};
\node (a2) at (2.5, -1.6) {};
\node[yshift=.2cm, draw, ellipse, fit=(a1)(a2)(op)(oq), inner sep=0pt] (aut) {};
\node[yshift=-.33cm] (a) at (aut.north) {$\aut$};
\end{scope}
\begin{scope}[color=black,local bounding box=res]
\node[initial, state, initial where=right, right=1 of op] (up) {$\underline{p}$};
\node[initial, state, initial where=right, right=1 of oq] (uq) {$\underline{q}$};
\path[tr]
(up) edge[near start] node {$p$} (op)
(uq) edge[near start] node {$q$} (oq)
(up) edge[loop above] node[inner sep=4pt] {$\underline{p}$} (up)
(uq) edge[loop above] node[inner sep=4pt] {$\underline{q}$} (uq);
\end{scope}
\begin{scope}[color=red]
\node[initial, state] (x) at (0,0) {$x$};
\node[accepting, state] (t) at (0, -1.5) {$z$};
\node[initial, state] (y) at (1,0) {$y$};

\path[tr]
(x) edge[loop above] node {$\#$} (x)
(x) edge node {$\$$} (t)
(t) edge[loop below] node {$\Sigma$} (t)
(y) edge[loop above] node {$\#$} (y)
(y) edge[bend right, near start] node {$\$$} (a1)
(y) edge[bend right, near start, below] node {$\$$} (a2);
\end{scope}

\node[rectangle, dashed, draw=black, fit=(A) (res),inner sep=2mm, label={$\anc{\aut}$}] (Fit) {};
\end{tikzpicture}

\caption{Sketch of the automaton $\aut'$ constructed in the proof of Proposition~\ref{prop:undecidability-residuality}.}
\label{fig:proof-undecidability-residuality}
\end{figure}

Let $\aut = (\Sigma, Q, I, F, \delta)$ be a nominal (nondeterministic) automaton on the alphabet $\Sigma$.
We apply Construction~\ref{cons:res} and extend the alphabet $\Sigma'$ further by
\[ \Sigma'' \coloneqq \Sigma' \cup \hl{\{ \$, \# \}} , \]
where we assume $\{\$, \#\}$ to be disjoint from $\Sigma$.
We define $\aut' = (\Sigma'', Q', I', F', \delta')$ by
\begin{align*}
Q' &= \anc{Q} \cup \hl{\{ x, y, z \}} \\
I' &= \anc{I} \cup \hl{\{ x, y \}} \\
F' &= \anc{F} \cup \hl{\{ z \}} \\
\delta'
&= \anc{\delta} \cup \hl{\{ (x, \$, z), (x, \#, x), (y, \#, y) \}
\cup \{ (z, a, z) \mid a \in \Sigma \}
\cup \{ (y, \$, q) \mid q \in I \} }
\end{align*}
See Figure~\ref{fig:proof-undecidability-residuality} for a sketch of the automaton $\aut'$.
This is built by adding the \hl{red} part to $\anc{\aut}$.
The key players are states $x$ and $y$ with their languages $\stlang(y) \subseteq \stlang(x)$.
Note that their languages are equal if and only if $\aut$ is universal.

Before we assume anything about $\aut$, let us analyse $\aut'$.
In particular, let us consider whether the residuality property holds for each state.
From (\ref{c1}) we know that this holds for $\anc{\aut}$.
For the states $x$ and $z$ we have $\stlang{z} = \Sigma^* = \$^{-1}\lang(\aut')$ and $\stlang{x} = \#^{-1}\lang(\aut')$ (see Figure~\ref{fig:proof-undecidability-residuality}).
The only remaining state for which we do not yet know whether the residuality property holds is state $y$.

If $\lang(\aut) = \Sigma^*$ (i.e., the original automaton is universal), then we note that $\stlang{y} = \stlang{x}$.
In this case, $\stlang{y} = \#^{-1}\lang(\aut')$.
So, in this case, $\aut'$ is residual.

Suppose that $\aut'$ is residual.
Then $\stlang{y} = w^{-1}\lang'$ for some word $w$.
Provided that $\lang(\aut)$ is not empty, there is some $u \in \lang(\aut)$.
So we know that $\$u \in \stlang{y}$.
This means that word $w$ cannot start with $a \in \Sigma$, $q$, $\underline{q}$ for $q \in Q$, or $\$$ as their derivatives do not contain $\$u$.
The only possibility is that $w = \#^k$ for some $k > 0$.
This implies $\stlang{y} = \stlang{x}$,
meaning that the language of $\aut$ is universal.

This proves that $\aut$ is universal iff $\aut'$ is residual.
\end{proof}

These results also hold for the subclass of non-guessing automata, as the constructions do not introduce any guessing and universality for non-guessing nondeterministic nominal automata is undecidable.

\paragraph*{Closure properties}
We will now show that several closure properties fail for residual languages. Interestingly, this parallels the situation for probabilistic languages: residual ones are not even closed under convex sums.
We emphasise that residual automata were devised for learning purposes, where closure properties play no significant role. In fact, one typically exploits closure properties of the wider class of nondeterministic models, e.g., for automata-based verification.
The following results show that in our setting this is indeed unavoidable.

Consider the alphabet $\Sigma = \atoms \cup \{ \Anchor(a) \mid a \in \atoms \}$ and the residual language $\lang_\up{r}$ from Section~\ref{sec:examples}.
We consider a second language $\lang_2 = \atoms^{*}$ which can be accepted by a deterministic (hence residual) automaton. We have the following non-closure results:
\begin{description}
\item[Union]
The language $\lang = \lang_\up{r} \cup \lang_2$ cannot be accepted by a residual automaton.
In fact, although derivatives of the form $\Anchor(a)^{-1}\lang$ are still join-irreducible (see Section~\ref{sec:examples}, residual case),
they
have no summand $\atoms^*$, which means that they cannot generate $a^{-1}\lang = \atoms^{*} \cup \bigcup_{b \neq a} \Anchor(b)^{-1}\lang$. By Theorem~\ref{thm:residual-characterisation}(3) it follows that $\lang$ is not residual.

\item[Intersection]
The language $\lang = \lang_\up{r} \cap \lang_2 = \lang_\up{n}$ cannot be accepted by a residual automaton, as we have seen in Section~\ref{sec:examples}.

\item[Reversal]
The language $\{ a w \mid a \text{ not in } w \}$ is residual (even deterministic), but its reverse language is $\lang_\up{n}$ and cannot be accepted by a residual automaton.

\item[Complement]
Consider the language $\lang_\up{ng,r}$ of words where some atom occurs twice. Its complement $\overline{\lang_\up{ng,r}}$ is the language of all fresh atoms, which cannot even be recognised by a nondeterministic nominal automaton~\cite{BojanczykKL14}.
\end{description}
Closure under \ch{concatenation and} Kleene star is yet to be settled.

\subsection{Length of characterising words}
We end this section by giving a result about the length of characterising words.
Note that in the finite case, the characterising words of an $n$-state residual automaton have length at most $2^n$, since one can determinise automata.
In our case, this no longer holds, and we show that the length of characterising word is not bounded in the number of states only. \ch{We state this result in terms of register automata to help intuition.}

\begin{proposition}
There is a family of residual register automata $\aut_k$ $(k \geq 1)$ with two states and $k$ registers of which the characterising words have length $k$.
\end{proposition}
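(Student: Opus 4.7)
The plan is to exhibit, for each $k \geq 1$, the language
\[
\lang_k \coloneqq \{ w \in \atoms^* \mid |\text{atoms}(w)| \leq k \}
\]
together with a witnessing register automaton $\aut_k$ of two locations and $k$ registers. The automaton I would construct has an initial, accepting location $q_0$ of empty support, and a second accepting location $q_1$ whose states $q_1\langle a_1, \ldots, a_k \rangle$ carry $k$ distinct atoms in their registers. The transitions are $q_0 \xrightarrow{x} q_1\langle x, y_2, \ldots, y_k \rangle$ for all pairwise distinct $x, y_2, \ldots, y_k$ (so the automaton guesses the remaining register contents when reading the first symbol), together with self-loops $q_1\langle a_1, \ldots, a_k \rangle \xrightarrow{a_i} q_1\langle a_1, \ldots, a_k \rangle$ on each stored atom. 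These transitions form an equivariant, orbit-finite relation, so $\aut_k$ is a bona fide nominal automaton with exactly two orbits of states, and the $q_1$-orbit has support size $k$.

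Next I would verify that $\lang(\aut_k) = \lang_k$ and that the automaton is residual. The language of $q_1\langle a_1, \ldots, a_k \rangle$ is read off directly as $\{a_1, \ldots, a_k\}^*$, and unrolling the $q_0$-transitions (plus the fact that $q_0$ accepts $\epsilon$) shows $\lang(q_0) = \lang_k$. For residuality, $\epsilon$ characterises $q_0$, and the candidate characterising word for $q_1\langle a_1, \ldots, a_k \rangle$ is the enumeration $a_1 a_2 \cdots a_k$: a direct computation yields
\[
w^{-1}\lang_k = \{v \mid |\text{atoms}(wv)| \leq k\},
\]
which equals $\{a_1, \ldots, a_k\}^*$ whenever $\text{atoms}(w) = \{a_1, \ldots, a_k\}$.

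The main substance of the proposition, and the step where I expect to spend the most care, is the matching lower bound: no word shorter than $k$ can characterise $q_1\langle a_1, \ldots, a_k \rangle$. Suppose $w^{-1}\lang_k = \{a_1, \ldots, a_k\}^*$. Applying the inclusion $\supseteq$ to $v = a_1 \cdots a_k \in \{a_1, \ldots, a_k\}^*$ forces $|\text{atoms}(w) \cup \{a_1, \ldots, a_k\}| \leq k$, hence $\text{atoms}(w) \subseteq \{a_1, \ldots, a_k\}$. Conversely, if $\text{atoms}(w) \subsetneq \{a_1, \ldots, a_k\}$, pick any atom $b \notin \{a_1, \ldots, a_k\}$; then $|\text{atoms}(wb)| \leq |\text{atoms}(w)| + 1 \leq k$, so $b \in w^{-1}\lang_k$ while $b \notin \{a_1, \ldots, a_k\}^*$, a contradiction. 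Hence $\text{atoms}(w) = \{a_1, \ldots, a_k\}$ and $|w| \geq k$, tightness being witnessed by $a_1 a_2 \cdots a_k$ itself.
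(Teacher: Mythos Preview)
Your construction is correct and in fact cleaner than the paper's. The paper proves the proposition with a different family: it works over the extended alphabet $\Sigma = \atoms \cup \{\Anchor(a) \mid a \in \atoms\}$ and takes $Q = \binom{\atoms}{k} \cup \{\top\}$, with all $k$-element subsets initial, $\top$ the unique accepting state, and transitions that keep a state $S$ fixed on $\Anchor(a)$ for $a \in S$ and on $a \notin S$, while $a \in S$ sends $S$ to $\top$. The word $\Anchor(a_1)\cdots\Anchor(a_k)$ is then an \emph{anchor} for $\{a_1,\ldots,a_k\}$, and the paper simply asserts that this is the shortest characterising word without spelling out the lower bound.

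Your route is genuinely different: you stay over the bare alphabet $\atoms$, use the ``at most $k$ distinct atoms'' language, and give an explicit and complete lower-bound argument showing that any characterising word for the register state must mention all $k$ atoms. This buys you a self-contained proof that does not rely on anchor symbols, and it actually fills in the lower-bound step that the paper leaves implicit. The paper's construction, on the other hand, ties back to the running example $\aut_\up{r}$ and reuses the anchor mechanism already introduced there. One incidental difference worth noting: your language $\lang_k$ happens to be deterministic (its derivative set has $k{+}2$ orbits), so your two-state automaton also illustrates that guessing can compress a deterministic language; the paper's language is not deterministic. Neither point affects the validity of the proposition.
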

\begin{proof}
We define a variation on the automaton $\aut_\up{r}$ from Section~\ref{sec:examples} using the alphabet $\Sigma = \atoms \cup \{ \Anchor(a) \mid a \in \atoms \}$.
The automaton $\aut_k$ is defined by the following sets, where ${\atoms \choose k}$ denotes the set of $k$-element subsets of $\atoms$ (\ch{note that $\aut_1 = \aut_r$}):
\[ Q = {\atoms \choose k} \cup \{\top\} \qquad I = {\atoms \choose k} \qquad F = \{\top\} \]
\begin{align*}
\delta = &\left\{ (S, \Anchor(a), S) \,\middle|\, S \in \textstyle {\atoms \choose k}, a \in S \right\} \\
          &\cup \left\{ (S, a, S) \,\middle|\, S \in \textstyle {\atoms \choose k}, a \notin S \right\} \\
          &\cup \left\{ (S, a, \top) \,\middle|\, S \in \textstyle {\atoms \choose k}, a \in S \right\}
\end{align*}
A state $S = \{ a_1, \ldots, a_k \} \in {\atoms \choose k}$ is anchored by the word $w = \Anchor(a_1) \ldots \Anchor(a_k)$, which is of length $k$.
This is also the shortest characterising word for that state.
Note that $Q$ only has two orbits, meaning that a register automaton equivalent to this nominal automaton only requires two states.
\end{proof}

\section{Exact learning}
\label{sec:learning}

In our previous paper on learning nominal automata~\cite{MoermanS0KS17}, we provided a learning algorithm to learn residual automata, that converges for deterministic languages.
However, we observed by experimentation that the algorithm was also able to learn certain nondeterministic languages.
At that point we did not know which class of languages could be accepted by residual nominal automata, and so it was left open whether the algorithm converges for all residual languages.
In this section we will answer this question negatively, but also provide a modified algorithm which does always converge.

\subsection{Angluin-style learning}

We briefly review the classical automata learning algorithms \lstar\ by Angluin~\cite{Angluin87} for deterministic automata, and \nlstar\ by Bollig et al.~\cite{BolligHKL09} for residual automata.

Both algorithms can be seen as a game between two players: \emph{the learner} and \emph{the teacher}.
The learner aims to construct the minimal automaton for an unknown language $\lang$ over a finite alphabet $\Sigma$.
In order to do this, it may ask the teacher, who knows about the language, two types of queries:
\begin{description}
\item[Membership query] Is a given word $w$ in the target language, i.e., $w \in \lang$?
\item[Equivalence query] Does a given \emph{hypothesis} automaton $\mathcal{H}$ recognise the target language, that is, is $\lang = \lang(\mathcal{H})$?
\end{description}
If the teacher replies \emph{yes} to an equivalence query, then the algorithm terminates, as the hypothesis $\mathcal{H}$ is correct. Otherwise, the teacher must supply a \emph{counterexample}, that is a word in the symmetric difference of $\lang$ and $\lang(\mathcal{H})$.
Availability of equivalence queries may seem like a strong assumption and in fact it is often weakened by allowing only random sampling (see~\cite{KearnsV94} or~\cite{Vaandrager17} for details).

Observations about the language made by the learner via queries are stored in an \emph{observation table} $\ot$. This is a table where rows and columns range over two finite sets of words $S,E \subseteq \Sigma^\star$ respectively, and $\ot(u,v) = 1$ if and only if $uv \in \lang$.
Intuitively, each row of $\ot$ approximates a derivative of $\lang$, in fact we have $\ot(u) \subseteq u^{-1}\lang$. However, the information contained in $\ot$ may be incomplete:
some derivatives $w^{-1}\lang$ are not reached yet because no membership queries for $w$ have been posed, and some pairs of rows $\ot(u)$, $\ot(v)$ may seem equal to the learner, because no word has been seen yet which distinguishes them.
The learning algorithm will add new words to $S$ when new derivatives are discovered, and to $E$ when words distinguishing two previously identical derivatives are discovered.

The table $\ot$ is \emph{closed} whenever one-letter extensions of derivatives are already in the table, i.e., $\ot$ has a row for $ua^{-1}\lang$, for all $u \in S,a \in \Sigma$.
If the table is closed,\footnote{\lstar\ also needs the table to be \emph{consistent}. We do not need that in our discussion here.}
\lstar\ is able to construct an automaton from $\ot$, where states are distinct rows (i.e., derivatives). The construction follows the classical one for the canonical automaton of a language from its derivatives~\cite{Nerode58}.
The \nlstar\ algorithm uses a modified notion of closedness, where one is allowed to take unions (i.e., a one-letter extension can be written as unions of rows in $\ot$), and hence is able to learn a RFSA accepting the target language.

When the table is not closed, then a derivative is missing, and a corresponding row needs to be added. Once an automaton is constructed, it is submitted in an equivalence query. If a counterexample is returned, then again the table is extended, after which the process is repeated iteratively.
The \lstar{} and \nlstar{} algorithms adopt different counterexample-handling strategies: the former adds a new row, the latter a new column.
Both result in a new derivative being detected.

\subsection{The nominal case}

In~\cite{MoermanS0KS17} we have given nominal versions of $\lstar$ and $\nlstar$, called $\nomlstar{}$ and $\nomnlstar{}$ respectively.
They seamlessly extend the original algorithms by operating on orbit-finite sets.
This allows us to learn automata over infinite alphabets, but using only finitely many queries.
The algorithm $\nomlstar$ always terminates for deterministic languages, because the language only has orbit-finitely many distinct derivatives (Theorem~\ref{thm:my-ne}), and hence only need orbit-finitely many distinct rows in the observation table.
However, it will \emph{never} terminate for languages not accepted by deterministic automata (such as residual or nondeterministic languages).
\begin{theorem}[\cite{Moerman19}]
$\nomlstar$ converges if and only if $\Res(\lang)$ is orbit-finite, in which case it outputs the canonical deterministic automaton accepting $\lang$.
Moreover, at most $\mathcal{O}(nk)$ equivalence queries are needed, where $n$ is the number of orbits of the minimal deterministic automaton, and $k$ is the maximum support size of its states.
\end{theorem}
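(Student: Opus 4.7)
The plan is to establish three facts in sequence: necessity of orbit-finiteness of $\Res(\lang)$ for convergence, sufficiency together with correctness of the output, and the $\mathcal{O}(nk)$ bound on equivalence queries. The overall template is the classical $\lstar$ convergence argument lifted to orbit-finite data, where the nominal Myhill--Nerode theorem plays the role of the finite Myhill--Nerode theorem.

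For the forward direction, if $\nomlstar$ halts, it does so after a successful equivalence query, returning a deterministic nominal automaton whose language is $\lang$. Its state space is orbit-finite by construction, so Theorem~\ref{thm:my-ne} forces $\Res(\lang)$ to be orbit-finite. For the backward direction and correctness, I would maintain two invariants on the observation table $\ot$ over equivariant orbit-finite $S, E \subseteq \Sigma^*$: (i) every row $\row(u)$ is a ``shadow'' of the derivative $u^{-1}\lang$, so distinct rows represent distinct derivatives; and (ii) once $\ot$ is closed and consistent, the automaton built from rows (Nerode-style) is a deterministic nominal automaton whose states are in bijection with the rows seen so far. If $\Res(\lang)$ has $n$ orbits, invariant (i) guarantees at most $n$ orbits of rows can ever appear, so the process of closing the table must stabilise. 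Upon stabilisation, invariant (ii) gives a hypothesis; if it is not yet correct, the counterexample strictly refines the row equivalence, and eventually the hypothesis coincides with the canonical deterministic automaton from the nominal Myhill--Nerode theorem.

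For the bound, I would set up a lexicographic progress measure on the pair $(\text{number of orbits of distinct rows},\ \text{total support size of orbit representatives})$, both bounded above by the parameters $n$ and $nk$ of the canonical automaton. Each equivalence query either answers ``yes'' (and we halt), or returns a counterexample which, after being processed by adding new rows (and possibly columns via the standard counterexample-handling procedure), strictly increases one component of the measure: either a fresh orbit of derivatives is discovered, or an existing orbit is split by a newly-added distinguishing word, enlarging the support of at least one state. The first event can occur at most $n$ times; between two such events, the second event can occur at most $\mathcal{O}(k)$ times, since the support of any row stabilises once it matches that of the target derivative, whose support has size at most $k$. Summing gives $\mathcal{O}(nk)$ equivalence queries.

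The main obstacle is the progress argument in the nominal setting. In the finite-alphabet case each counterexample strictly enlarges the hypothesis by at least one state, but in the nominal case, a single orbit of rows can be refined into finer orbits as $E$ grows, and the support of an equivalence class can change non-monotonically at intermediate stages. The careful point is to verify that the lexicographic measure is indeed monotone across the counterexample-processing step of $\nomlstar$, using that at the moment of an equivalence query the table is closed and every row's support is a subset of the support of the derivative it represents. Once this monotonicity is established, the rest of the argument reduces to orbit-counting, and the tight $\mathcal{O}(nk)$ bound follows.
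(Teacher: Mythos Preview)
The paper does not prove this theorem: it is stated with the citation \cite{Moerman19} and no argument is given in the present paper. There is therefore no ``paper's own proof'' to compare your proposal against.

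Evaluated on its own, your outline follows the standard template from the cited work and is broadly correct. The forward direction is exactly right (a halting run produces an orbit-finite deterministic automaton, so Theorem~\ref{thm:my-ne} applies), and the backward direction via the row-to-derivative embedding is the intended mechanism. One caution on the $\mathcal{O}(nk)$ bound: your phrasing ``between two orbit-discovery events, at most $\mathcal{O}(k)$ support-increase events'' is not quite the right decomposition, because a newly discovered orbit may already carry nonzero support, and a single counterexample can affect the support of several orbits simultaneously. The cleaner invariant is a single scalar measure, e.g.\ the sum over current row-orbits of $(1 + \text{support size of a representative})$, which is bounded above by $n(k+1)$ and strictly increases after every failed equivalence query. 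With that adjustment your argument goes through; as written, the lexicographic version would need additional bookkeeping to justify the per-orbit $\mathcal{O}(k)$ claim.
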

The nondeterministic case is more interesting.
Using Theorem~\ref{thm:residual-characterisation}, we can finally establish which nondeterministic languages can be characterised via orbit-finitely many observations.
\begin{corollary}[of Theorem~\ref{thm:residual-characterisation}]
Let $\lang$ be a nondeterministic nominal language.
If $\lang$ is a residual language,
then there exists an observation table with orbit-finitely many rows and columns from which we can construct the canonical residual automaton.
\end{corollary}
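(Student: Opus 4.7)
The plan is to combine Theorem~\ref{thm:residual-characterisation} with the orbit-finite observation-table machinery of \nomnlstar{}. By the characterisation, $J \coloneqq \JI(\Res(\lang))$ is orbit-finite. For each orbit of $J$, pick a characterising word and take the orbit closure; this yields an orbit-finite set $S_0 \subseteq \Sigma^*$ with $\{ w^{-1}\lang \mid w \in S_0 \} = J$. Let the row set be $S \coloneqq S_0 \cup (S_0 \cdot \Sigma)$, which is orbit-finite because $\Sigma$ is. The extensions $S_0 \cdot \Sigma$ are needed to read off transitions.

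For the column set $E$, the key property required is that the restrictions $\ot(u, \cdot) \colon E \to 2$ should faithfully reflect, for $u,u' \in S$, both equality of derivatives $u^{-1}\lang = u'^{-1}\lang$ and the union decompositions $(ua)^{-1}\lang = \bigvee \{ u_0^{-1}\lang \mid u_0 \in S_0,\, u_0^{-1}\lang \leq (ua)^{-1}\lang \}$ needed to close the table. Since $J$ and the one-letter extensions $S_0 \cdot \Sigma$ are orbit-finite, there are only orbit-finitely many such separation and membership tests to perform; for every failing pair pick a witness word from $\Sigma^*$ and take the orbit closure. The resulting $E \subseteq \Sigma^*$ is orbit-finite and by construction separates all required instances.

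Given the table $\ot \colon S \times E \to 2$ defined by $\ot(u,v) = 1 \iff uv \in \lang$, we apply the construction of Theorem~\ref{thm:residual-characterisation} directly to the rows: states are the join-irreducible distinct rows, initial states those whose row lies below the row of $\epsilon$ (testable on $E$), accepting states those with $\ot(u,\epsilon) = 1$ (we include $\epsilon$ in $E$), and a transition $u \xrightarrow{a} u'$ exists whenever $u'^{-1}\lang \leq (ua)^{-1}\lang$, again testable on $E$. Correctness of this reconstruction reduces to the correctness of the canonical automaton in Theorem~\ref{thm:residual-characterisation}, plus the separation property of $E$.

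The main obstacle is the construction of $E$, which must simultaneously separate distinct join-irreducible derivatives and certify the union decompositions witnessing closedness, while remaining orbit-finite. The argument relies on the orbit-finiteness of $J$ and of $S_0 \cdot \Sigma$ from Theorem~\ref{thm:residual-characterisation}, together with equivariance of set inclusion (so that a witness chosen for one representative lifts to its entire orbit). A minor additional care is needed to ensure that $E$-agreement implies $\Sigma^*$-agreement on the finitely many relevant pairs; this follows from picking, for each pair of orbits of $S$ whose full derivatives differ, a distinguishing word, and adding its orbit to $E$.
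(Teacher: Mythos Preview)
The paper does not give a proof of this corollary; it is stated as an immediate consequence of Theorem~\ref{thm:residual-characterisation}, and the only related technical work appears later (Lemmas~\ref{lem:consistent}--\ref{lem:agree}) in the service of the learning algorithm, where a \emph{different} table is used (row-set $\Sigma^{\leq l}$ rather than hand-picked characterising words). Your explicit construction is therefore not redundant, and your overall line---take $S_0$ to be an orbit-finite set of characterising words for $\JI(\Res(\lang))$, take $S = \{\epsilon\} \cup S_0 \cup S_0\Sigma$, and build $E$ from orbit-finitely many witnesses---is exactly the intended argument made precise.

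There is one genuine soft spot. You require $E$ to ``reflect equality of derivatives'' and to ``certify the union decompositions'' of the extension rows $ua$, and you later say inclusions $u'^{-1}\lang \leq (ua)^{-1}\lang$ are ``testable on $E$''. Two things are being conflated. First, to read off the transitions of the canonical automaton you need $E$ to reflect \emph{inclusion} (not merely equality) on all relevant pairs in $S \times S$; reflecting equality alone does not guarantee $\ot(u',\cdot) \leq \ot(ua,\cdot) \iff u'^{-1}\lang \leq (ua)^{-1}\lang$. Second, and more subtly, if you literally take the states to be the \emph{join-irreducible} rows of the table (as in the paper's hypothesis construction), then pairwise reflection of $\leq$ is still not enough: for $u \in S_0$ the row $\ot(u,\cdot)$ could coincide with $\bigvee\{\ot(u',\cdot) \mid u' \in S,\; u'^{-1}\lang < u^{-1}\lang\}$ even though $u^{-1}\lang$ strictly dominates that union in $\Powfs(\Sigma^*)$. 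The fix is cheap and stays orbit-finite: for each orbit of $S_0$, pick a representative $u$ and add to $E$ (the orbit of) a word witnessing $u^{-1}\lang \not\leq \bigvee\{u'^{-1}\lang \mid u' \in S,\; u'^{-1}\lang < u^{-1}\lang\}$, which exists precisely because $u^{-1}\lang$ is join-irreducible. Alternatively, since your $S_0$ already hits exactly $\JI(\Res(\lang))$, you can sidestep the issue entirely by defining the state set as the \emph{distinct upper rows} rather than the join-irreducible rows; then only reflection of $\leq$ is needed. Finally, do not forget to include $\epsilon$ in $S$ so that the row $\row(\epsilon)$ is available when defining the initial states.
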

This explains why in~\cite{MoermanS0KS17} \nomnlstar{} was able to learn some residual nondeterministic automata: an orbit-finite observation table exists, which allows \nomnlstar{} to construct the canonical residual automaton.
Unfortunately, the \nomnlstar{} algorithm does not guarantee that it always finds this orbit-finite observation table.
We only have that guarantee for deterministic languages.
The following example shows that \nomnlstar{} may indeed diverge when trying to close the table.

\begin{example}
Suppose \nomnlstar{} tries to learn the residual language $\lang$ accepted by the automaton below over the alphabet $\Sigma = \atoms \cup \{ \Anchor(a) \mid a \in \atoms \}$. This is a slight modification of the residual language of Section~\ref{sec:examples}.

\begin{center}
\begin{tikzpicture}[automaton]
\node[state] (q0) at (0,0) {$a$};
\node[state] (q1) at (2,0) {$a$};
\node[state, accepting] (fin) at (0,-1) {};

\path[tr]
(init) edge node {guess $a$} (q0)
(q0) edge[bend left=15] node {$\Anchor(\neq a)$} (q1)
(q0) edge[loop above] node {$\neq a$} (q0)
(q1) edge[loop above] node {$\neq a$} (q1)
(q1) edge[bend left=15] node {$\Anchor(a)$} (q0)
(q1) edge[bend left] node {$a$} (fin)
(q0) edge[bend right=15] node[left] {$\Anchor(a)$} (fin)
(q0) edge[bend left=15] node {$a$} (fin);
\end{tikzpicture}
\end{center}
The algorithm starts by considering the row for the empty word $\epsilon$, and its one-letter extensions $\epsilon \cdot a = a$ and $\epsilon \cdot \Anchor(a) = \Anchor(a)$. These rows correspond to the derivatives $\epsilon^{-1}\lang = \lang$, $a^{-1}\lang$ and $\Anchor(a)^{-1}\lang $.
Column labels are initialised to the empty word $\epsilon$. At this point $a^{-1} \lang$ and $\Anchor(a)^{-1} \lang$ appear identical, as the only column $\epsilon$ does not distinguish them.
However, they appear different from $\epsilon^{-1}\lang$, so the algorithm will add the row for either $a$ or $\Anchor(a)$ in order to close the table.
Suppose the algorithm decides to add $a$. Then it will consider one-letter extensions $ab$, $abc$, $abcd$, etc.
Since these correspond to different derivatives --- each strictly smaller than the previous one --- the algorithm will get stuck in an attempt to close the table.
At no point it will try to close the table with the word $\Anchor(a)$, since it stays equivalent to $a$.
So in this case \nomnlstar{} will not terminate.
However, if the algorithm instead adds $\Anchor(a)$ to the row labels, it will then also add $\Anchor(a)\Anchor(b)$, which is a characterising word for the initial state.
In that case, \nomnlstar{} will terminate.
\end{example}

\subsection{Modified \nomnlstar{}}

We modify the \nomnlstar{} algorithm from~\cite{MoermanS0KS17} to ensure that it always terminates.
We do this by changing how the table will be closed.
The algorithm is shown in Algorithm~\ref{fig:nfa-alg} with the changes made to \nomnlstar{} in red.
In short, the change is as follows.
When the algorithm adds a word $w$ to the set of rows, then it also adds all other words of length $|w|$.
Since all words of bounded length are added, the algorithm will eventually find all characterising words of the canonical residual automaton, and it will therefore be able to reconstruct this automaton.

\begin{algorithm}
\begin{codebox}
	\Procname{\proc{Modified \nomnlstar{} learner}}
	\li Initialise $\ot$ with $S, E \gets \{\epsilon\}$
	\li \Repeat
	\li \While $\ot$ is not join-closed or not join-consistent \Do
	\li \If $\ot$ is not join-closed
	\li \Then\label{line:begin-nfa-closed}find $s\in S,a \in A$ such that
	$\row(sa) \in \JI(\Rows(\ot)) \setminus \RowsUpp(\ot)$
	\li {\color{red}$l \gets$ length of the word $sa$}
	\li {\color{red}$S \gets S \cup \Sigma^{\leq l}$}\label{line:clos-add-row}
	\End\label{line:nfa-end-closed}
	\li \If $\ot$ is not join-consistent\label{line:nfa-begin-const}
	\li \Then find $s_1, s_2 \in S$, $a \in A$, and $e \in E$ such that \zi \qquad $\row(s_1) \leq \row(s_2)$ but $e \in \row(s_1 a)$ and $e \notin \row(s_2 a)$
	\label{line:nfa-cons-witness}
	\li $E \gets E \cup \orb(ae)$
	\label{line:cons-add-col}
	\End
	\End
	\label{line:nfa-end-const}
	\li Query $\mathcal{H} \gets \aut(\ot)$ for equivalence
	\label{line:nfa-conj}
	\li \If the Teacher replies \textbf{no}, with a counter-example $t$
	\li \Then $E \gets E \cup \{ \orb(t_0) \mid t_0 \text{ is a suffix of } t \}$
	\End
	\li \Until the Teacher replies \textbf{yes} to the equivalence query.
	\li \Return $\mathcal{H}$
\end{codebox}
\caption{Modified nominal \nlstar{} algorithm for Theorem~\ref{thm:learning-residual}.}\label{fig:nfa-alg}
\end{algorithm}

We briefly recall the notation we use in the algorithm and afterwards prove convergence.
We denote the observation table $\ot$ by the pair $\ot = (S, E)$ of row and column indices.
The membership will be queried for the set $S E \cup S \Sigma E$ and these observations will be stored during the algorithm.
The function $\row_{\ot} \colon S \cup S \Sigma \to \Powfs(E)$ returns the content of each row, i.e., $\row_{\ot}(t) \coloneqq \{ e \in E \mid te \in \lang \}$.%
\footnote{It is more common to use $2^E$ instead of $\Powfs(E)$ and use a more ``functional'' approach in which $\row_{\ot}(s)(e) = \lang(se)$. However, we use $\Powfs(E)$ to remain consistent with our notation in the paper.}
Since $\row_{\ot}$ takes values in a nominal join-semilattice, we use the notation (${\leq}, {<}, {\vee}, \ldots$) from Section~\ref{sec:nominal-join-semilattices} on rows.
Note that $\row_{\ot}(s) = s^{-1} \lang \cap E$ can be thought of an approximation to $s^{-1} \lang$.
We will omit the subscript $T$ in $\row$.

Given an observation table $\ot = (S, E)$, we define the set of rows as
\[ \Rows(\ot) \coloneqq \{ \row(t) \mid t \in S \cup S\Sigma \} . \]
This is an orbit-finite poset, ordered by $\leq$, that is, the order on $\Powfs(E)$ given by subset inclusion.
We define the set of \emph{upper rows} as $\RowsUpp(T) \coloneqq \{ \row(s) \mid s \in S \} \subseteq \Rows(\ot)$.

\begin{definition}
A table $\ot = (S, E)$ is
\begin{itemize}
\item \emph{join-closed} if for each $s \in S$ and $a \in \Sigma$ we have
\[ \row(sa) = \bigvee \{ \row(s) \mid \row(s) \leq \row(sa), s \in \JI(\Rows(\ot)) \cap \RowsUpp(\ot) \}, \]
in words: each extended row $sa$ can be obtained as a join of join-irreducible rows in $S$;
\item \emph{join-consistent} if for all $s_1, s_2 \in S$ and $a \in \Sigma$ we have
\[ \row(s_1) \leq \row(s_2) \quad \implies \quad \row(s_1 a) \leq \row(s_2 a) . \]
\end{itemize}
\end{definition}

Another way to define join-closedness would be to consider the set $\JI(\RowsUpp(\ot))$ instead of $\JI(\Rows(\ot)) \cap \RowsUpp(\ot)$.
This would slightly change the algorithm, but not substantially.
We stick to the original description of \nlstar{}~\cite{BolligHKL09}.

\begin{construction}
Given a join-closed and join-consistent observation table $\ot = (S, E)$ we define an automaton $\aut(\ot) = (\Sigma, Q, I, F, \delta)$ as follows:
\begin{align*}
Q &= \JI(\Rows(T)) \cap \RowsUpp(T) \\
I &= \{ r \in Q \mid r \leq \row(\epsilon) \} \\
F &= \{ r \in Q \mid \epsilon \in r \} \\
\delta &= \{ (\row(s), a, r') \mid s \in S, r' \in Q, r' \leq \row(s a) \}
\end{align*}
This closely follows the definition of the canonical residual automaton in Theorem~\ref{thm:residual-characterisation}, but only uses the information from the observation table.
This construction is effective, because one can decide whether $r \in \Rows(T)$ is a join-irreducible element if $r \neq \bigvee \{y \in \Rows(T) \mid y < x \}$ and $r$ is non-empty.
This is a set-builder expression in the programming language developed in~\cite{Bojanczyk19}.
The rest of the construction is directly given as set-builder expressions.
\end{construction}

In the remainder of this section we prove termination of our modified learning algorithm.
In the following, $|X|$ is the \emph{orbit-count} of an orbit-finite set $X$.
By \emph{atom-dimension} of $X$ we mean the maximal size of supports of elements of $X$.
Let $p(k)$ denote the number of orbits of the set $\atoms^{(k)} \times \atoms^{(k)}$, where $\atoms^{(k)} = \{ (a_1,a_2,\dots,a_k) \mid a_i \in \atoms, \forall i,j: a_i \neq a_j \}$;
it equals the number of partial permutations on an $k$-element set.

\begin{theorem}
\label{thm:learning-residual}
Algorithm~\ref{fig:nfa-alg} query learns residual nominal languages.
Moreover, it uses at most $\mathcal{O}(l + |\Sigma^{\leq l}|^2 \cdot p(dl))$ equivalence queries,
where $l$ is the length of the longest characterising word and $d$ is the atom-dimension of $\Sigma$.
\end{theorem}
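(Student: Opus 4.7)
If the Teacher accepts an equivalence query then the returned hypothesis equals $\lang$ by definition, so it suffices to prove that the algorithm terminates with a correct hypothesis. Fix the canonical residual automaton $\aut^{*}$ for $\lang$ given by Theorem~\ref{thm:residual-characterisation}, and let $l$ denote the length of its longest characterising word and $J^{*} \subseteq \Res(\lang)$ its orbit-finite set of states (the join-irreducibles of $\Res(\lang)$).

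\textbf{Termination of the inner loop.} Each iteration of the while-loop either enlarges $\RowsUpp(\ot)$ (in the closing branch, by extending $S$ to $\Sigma^{\leq l'}$ for some $l'$) or strictly enlarges the orbit-count of $E$ (in the consistency branch). Both quantities are bounded above: $\RowsUpp(\ot) \subseteq \Rows(\ot) \subseteq \Powfs(E)$ is orbit-finite, and $E$ is populated by suffix-orbits of counterexamples whose supports are bounded as discussed below. Hence the inner loop terminates.

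\textbf{Convergence to $\aut^{*}$.} The key feature of the modification is that a closing step replaces $S$ by $S \cup \Sigma^{\leq l'}$ rather than adding only the single orbit $\orb(sa)$; this guarantees that after finitely many closings $S \supseteq \Sigma^{\leq l}$. In that case every state $q \in J^{*}$ of $\aut^{*}$ has a characterising word $w_q \in S$, and I would verify that, once $E$ is rich enough, $\row(w_q) = \lang(q) \cap E$ and these rows are pairwise distinct and join-irreducible in $\Rows(\ot)$. Then $\aut(\ot)$ agrees with $\aut^{*}$ on $S \cdot E$, and any remaining discrepancy yields a counterexample that strictly refines the table. Since $J^{*}$ and $\Sigma^{\leq l}$ are orbit-finite, this refinement process must stabilise, at which point the equivalence query succeeds.

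\textbf{Query bound.} The additive $l$ term arises from the interleaving of closings with equivalence queries: a new round of closing is only triggered after a counterexample enriches $E$ sufficiently to expose non-closedness, so each of the at most $l$ distinct levels $\Sigma^{\leq l'}$ attained by $S$ is accompanied by an equivalence query. For the main term, once $S = \Sigma^{\leq l}$ we have $|\RowsUpp(\ot)| \leq |\Sigma^{\leq l}|$, and columns in $E$ are suffixes of counterexamples whose supports can be taken within the atom-dimension bound $dl$, giving $|E| \in \mathcal{O}(|\Sigma^{\leq l}| \cdot p(dl))$. Every failed equivalence query strictly refines either the row set of the hypothesis or the column set, producing the claimed $\mathcal{O}(l + |\Sigma^{\leq l}|^{2} \cdot p(dl))$ bound.

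\textbf{Main obstacle.} The technically delicate step is the bound on $|E|$: counterexamples can in principle be arbitrarily long, so one must justify the $p(dl)$ factor by arguing that suffixes whose support exceeds $dl$ cannot distinguish rows of $\RowsUpp(\ot)$ once the characterising words of $\aut^{*}$ have been captured. This is where the structural properties of $\aut^{*}$ and the length bound $l$ on its characterising words are essential, and it is the heart of the complexity proof.
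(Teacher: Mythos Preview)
Your overall structure (inner-loop termination, convergence once $S \supseteq \Sigma^{\leq l}$, query bound) mirrors the paper's, and the convergence part is essentially right. The gap is in the query bound, precisely at the point you flag as the ``main obstacle''.

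You try to obtain the factor $|\Sigma^{\leq l}|^{2}\cdot p(dl)$ by bounding the orbit-count of $E$, arguing that suffixes of counterexamples can be taken with support of size at most $dl$. This does not work: the teacher may return counterexamples of arbitrary length and with arbitrarily large support, and there is no reason why a distinguishing suffix $e$ for two rows $s_1,s_2\in S$ should satisfy $\supp(e)\subseteq\supp(s_1)\cup\supp(s_2)$. So $|E|$ is simply not controlled by $l$ and $d$, and the obstacle you identify is real for your approach.

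The paper sidesteps this entirely. It never bounds $|E|$; instead it uses as progress measure the preorder $\sqsubseteq$ on $S$ given by $s_1\sqsubseteq s_2 \iff \row(s_1)\leq\row(s_2)$. Every column addition that matters (a consistency fix, or a counterexample that actually changes the hypothesis) strictly shrinks ${\sqsubseteq}\subseteq S\times S$. Since $S=\Sigma^{\leq l}$ has atom-dimension at most $dl$, the nominal set $S\times S$ has at most $|\Sigma^{\leq l}|^{2}\cdot p(dl)$ orbits, and a descending chain of equivariant subsets of $S\times S$ has at most that length. So the $p(dl)$ factor comes from counting orbits of \emph{pairs of row indices}, not from anything about $E$. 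The additive $l$ is just the number of closing steps, as you say. Replace your $|E|$ argument by this preorder-refinement argument and the bound goes through.
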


The theorem will be proven with the help of the following lemmata.

\begin{lemma}\label{lem:consistent}
\ch{
Every observation table $(S,E)$ during the execution can be extended to a join-consistent table $(S, E')$ by adding orbit-finitely many columns.
}
\end{lemma}
\begin{proof}
Note that the $\row$ function defines a preorder $\sqsubseteq$ on $S$ defined by $s_1 \sqsubseteq s_2$ iff $\row(s_1) \leq \row(s_2)$.
Each time columns are added \ch{in line~\ref{line:cons-add-col}} to solve join-inconsistency, this preorder is refined.
So we obtain a chain of preorders:
\[ \cdots \,\subsetneq\, {\sqsubseteq_3} \,\subsetneq\, {\sqsubseteq_2} \,\subsetneq\, {\sqsubseteq_1} \,\subseteq\, {S \times S} . \]
\ch{When the preorder has been maximally refined it will only contain identity pairs, so join-consistency trivially holds.}
Since the set $S \times S$ is orbit-finite, this chain has a length at most $|S \times S|$.
The number of orbits in $S \times S$ is bounded\footnote{Slightly better bounds can be given by triangular numbers, but the asymptotics remain the same.}
  by $|S|^2 \cdot p(k)$, where $k$ is the atom-dimension of $S$.%
\end{proof}

\begin{lemma}\label{lem:closed}
If the set $S$ of an observation table $T = (S, E)$ contains all characterising words, then the table is join-closed.
\end{lemma}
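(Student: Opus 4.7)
The plan is to prove the identity
\[ \row(sa) \;=\; \bigvee\{\row(s') \mid s' \in S,\; \row(s') \in \JI(\Rows(\ot)),\; \row(s') \leq \row(sa)\} \]
for every $s \in S$, $a \in \Sigma$, by combining \Cref{thm:residual-characterisation} with \Cref{lem:join-of-ji}. The starting observation is the fundamental identity $\row(t) = t^{-1}\lang \cap E$ for every $t \in S \cup S\Sigma$, together with the fact that intersection with $E$ commutes with finitely supported unions in $\Powfs(\Sigma^*)$.

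The crux is the claim that every $r \in \JI(\Rows(\ot))$ equals $\row(w)$ for some characterising word $w \in S$, and hence lies in $\RowsUpp(\ot)$. To establish it, I would pick $t \in S \cup S\Sigma$ with $r = \row(t)$. By \Cref{thm:residual-characterisation}(3), the derivative decomposes as $t^{-1}\lang = \bigvee\{x \in \JI(\Res(\lang)) \mid x \leq t^{-1}\lang\}$. Since $S$ contains a characterising word for each orbit of $\JI(\Res(\lang))$, every such $x$ can be realised as $w^{-1}\lang$ for some $w \in J_t \subseteq S$. Intersecting the decomposition with $E$ and pushing the intersection through the finitely supported union yields $r = \bigvee_{w \in J_t} \row(w)$, a join inside $\Rows(\ot)$. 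Join-irreducibility of $r$ then forces $r = \row(w)$ for some $w \in J_t$, so $r \in \RowsUpp(\ot)$ as claimed.

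With the claim in hand, the proof concludes by applying \Cref{lem:join-of-ji}(2) to the orbit-finite equivariant set $\Rows(\ot)$ ordered by inclusion, obtaining $\row(sa) = \bigvee\{r \in \JI(\Rows(\ot)) \mid r \leq \row(sa)\}$; by the claim each such $r$ belongs to $\RowsUpp(\ot)$, which is exactly the join-closedness identity. The main obstacle is checking that the set $J_t$ of characterising words is finitely supported, so that join-irreducibility of $r$ in $\Rows(\ot)$ can actually be applied to its decomposition. This is handled by noting that $J_t$ is built from the finitely supported parameter $t$ via the equivariant operations $\leq$, $\JI$, and $w \mapsto w^{-1}\lang$; equivariance of the correspondence between orbits of $\JI(\Res(\lang))$ and orbits of characterising words in $S$ is guaranteed by interpreting the hypothesis ``$S$ contains all characterising words'' as closing $S$ under the $\Perm(\atoms)$-action, which is harmless for the algorithm since orbits are added wholesale.
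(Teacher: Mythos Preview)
Your proposal is correct and follows essentially the same approach as the paper: decompose derivatives into join-irreducibles via \Cref{thm:residual-characterisation}, realise each join-irreducible derivative by a characterising word in $S$, and push the resulting join through the restriction ${-}\cap E$. The paper packages this by first passing to an \emph{idealised} table $T' = (S, \Sigma^*)$ (where $\JI(\Rows(T')) = \JI(\Res(\lang))$ is immediate) and then restricting the join equation to $E$, whereas you work directly in $T$ and invoke \Cref{lem:join-of-ji}(2) explicitly; the underlying argument is the same.
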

\begin{proof}
For this lemma, we consider the idealised observation table $\ot' = (S, \Sigma^*)$.
In this table, we have $\row_{\ot'}(s) = s^{-1}\lang$ for each $s \in S$.
Since $S$ contains all characterising words, we have $\JI(\Rows(\ot')) = \JI(\Res(\lang))$.
This means that the idealised table is join-closed, that is, for $s \in S$ and $a \in \Sigma$ we have
\[ \row_{\ot'}(sa) = \bigvee_{s \in I} \row_{\ot'}(s), \]
for a suitable orbit-finite set of row indices $I$.
This equation still holds if we restrict to the set $E$,
and so the table $\ot$ is join-closed.
\end{proof}

\begin{lemma}\label{lem:closed-consistent}
Given an observation table $(S, E)$ for a residual language, the algorithm extends it to a join-closed and join-consistent table $(S', E')$ in finitely many steps.
\end{lemma}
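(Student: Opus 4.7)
The plan is to bound separately the number of closedness steps and the number of consistency steps in the inner \textbf{while} loop, and then combine them. Since $\lang$ is a residual language, Theorem~\ref{thm:residual-characterisation} tells us that $\JI(\Res(\lang))$ is orbit-finite. Choosing one characterising word per orbit of join-irreducibles gives a finite set of words; let $l^{*}$ be the maximum of their lengths. The key invariant I will track is the largest $L$ such that $\Sigma^{\leq L} \subseteq S$: this quantity starts at $0$ (since $S = \{\epsilon\}$) and is non-decreasing throughout the algorithm.

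The core of the argument is a monotonicity claim for the closedness branch. If the table is not join-closed with witness $sa$, then by the choice of witness $\row(sa) \in \JI(\Rows(\ot)) \setminus \RowsUpp(\ot)$, so in particular $sa \notin S$. Hence, if $S \supseteq \Sigma^{\leq L}$ at the moment the closedness step fires, then necessarily $|sa| > L$, so the assignment $S \gets S \cup \Sigma^{\leq |sa|}$ upgrades the invariant to $S \supseteq \Sigma^{\leq L+1}$. Thus every invocation of the closedness step strictly increases the tracked length $L$ by at least one. On the other hand, once $L \geq l^{*}$, the set $S$ contains a representative of every orbit of characterising words, so Lemma~\ref{lem:closed} guarantees that the table is join-closed and the closedness branch can no longer fire. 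Combining these two observations, the closedness branch is triggered at most $l^{*}$ times.

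It remains to handle the consistency branch. Between any two consecutive closedness steps, the set $S$ is fixed, so Lemma~\ref{lem:consistent} applies and supplies a finite bound on the number of consistency steps that can occur before either closedness fails again or the table becomes both join-closed and join-consistent. Chaining at most $l^{*}+1$ such consistency phases together with the at most $l^{*}$ closedness steps yields a finite overall bound on iterations of the \textbf{while} loop, and each individual step is effective (additions to $S$ or $E$ only involve orbit-finite sets and finitely many new membership queries). The subtle point I expect to be the main obstacle is the precise reading of Lemma~\ref{lem:closed}: one must be sure that the fixed orbit-representatives of characterising words, and not a priori all characterising words, are enough to force join-closedness once their orbits are present in $S$, which is exactly what adding $\Sigma^{\leq l^{*}}$ delivers thanks to the equivariance of the notion of characterising word.
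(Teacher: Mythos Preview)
Your proposal is correct and follows essentially the same approach as the paper's own proof, which is a very terse three-sentence version of exactly the argument you give: bound the closedness steps by the length needed for $S$ to contain characterising words (your $l^{*}$), invoke Lemma~\ref{lem:closed} once that length is reached, and use Lemma~\ref{lem:consistent} to bound the consistency steps in each phase where $S$ is fixed. Your monotonicity argument for $L$ and your handling of the interleaving of phases simply spell out what the paper leaves implicit, and your closing remark about equivariance of characterising words is the right justification for why reaching $\Sigma^{\leq l^{*}}$ suffices for Lemma~\ref{lem:closed}.
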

\begin{proof}
\ch{We will show that lines \ref{line:clos-add-row} and \ref{line:cons-add-col} are executed finitely many times.
Line \ref{line:clos-add-row} adds $\Sigma^l$ incrementally to the set $S$ with increasing $l$, and $l$ is only increased until $\Sigma^{\leq l}$ contains all the required characterising words (Lemma~\ref{lem:closed}).
When line \ref{line:cons-add-col} is executed, we have two cases: if the resulting $(S,E')$ table is join-closed but not join-consistent, we keep adding columns, but this can only happen finitely-many times (Lemma~\ref{lem:consistent}); otherwise $S$ is extended, but as previously shown this can only happen finitely many times.}
\end{proof}

\begin{lemma}\label{lem:agree}
The constructed hypothesis agrees with the values in the table.
\end{lemma}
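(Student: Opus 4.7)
The plan is to show that for every $s \in S \cup S\Sigma$ and every $e \in E$, the word $se$ is accepted by $\aut(\ot)$ if and only if $e \in \row(s)$. Writing $U_s \coloneqq \{ r \in Q \mid r \leq \row(s) \}$, I would split the argument into two inductive stages and then combine them.

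First, I would prove by induction on $|s|$ that the set of states reached from $I$ on reading $s$ is exactly $U_s$, for every $s \in S \cup S\Sigma$. The base case $s = \epsilon$ is just the definition of $I$. For the step from $s$ to $sa$, take $r = \row(s_0) \in U_s$ with $s_0 \in S$ and $\row(s_0) \leq \row(s)$; join-consistency yields $\row(s_0 a) \leq \row(sa)$, so transitions out of $U_s$ land in $U_{sa}$. Conversely, every $r' \in U_{sa}$ is a join-irreducible row in $\RowsUpp(\ot)$ below $\row(sa)$, and by join-closedness applied to $sa$ there is some $s_0 \in S$ with $\row(s_0) \leq \row(s)$ witnessing $r' \leq \row(s_0 a)$, exhibiting $r'$ as a transition target out of $U_s$. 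Join-consistency also guarantees that $\delta$ is well-defined on $Q$ regardless of the chosen representative $s_0$ with $\row(s_0) = r$.

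Second, I would prove by induction on $|e|$ that for each state $r = \row(s_0) \in Q$ and each $e \in E$, the word $e$ is accepted from $r$ if and only if $e \in \row(s_0)$. Here I rely on the fact that the algorithm keeps $E$ suffix-closed: the counterexample step adds all suffixes explicitly, and the consistency step adds $\orb(ae)$ for an $e$ that was already in $E$. The base case $e = \epsilon$ is the definition of $F$. For $e = ae'$, combining the transition definition, the induction hypothesis, and join-closedness of $\row(s_0 a)$ gives that $ae'$ is accepted from $\row(s_0)$ iff some $r' \in Q$ below $\row(s_0 a)$ contains $e'$, iff $e' \in \row(s_0 a)$, which unfolds to $ae' \in \row(s_0)$.

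Combining the two claims, the set of words in $E$ accepted by $\aut(\ot)$ after reading $s$ coincides with $\bigvee U_s$, which equals $\row(s)$ by join-closedness, yielding the stated equivalence. The main obstacle lies in the interlocking role of the two table conditions: join-consistency is needed both to make $\delta$ well-defined on rows and to make it respect $\leq$ through transitions, while join-closedness is needed to recover every reached row faithfully from the join-irreducibles that actually appear as states in $Q$. Both are required in both inductions, and the suffix-closedness of $E$ maintained by the algorithm is what lets the induction on $|e|$ descend.
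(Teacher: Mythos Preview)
Your second induction (on $|e|$, showing $\lang(\row(s_0)) \cap E = \row(s_0)$ for states $\row(s_0) \in Q$) is exactly what the paper proves, and it is the entire content of the paper's argument; the paper simply notes that suffix-closedness of $E$ makes the step go through and stops there. So on that part you and the paper coincide.

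Where you diverge is the first induction. The paper does not prove $\delta^*(I,s) = U_s$ at all; it interprets ``agrees with the table'' only at the level of states, which is enough for the intended use (together with $\row(\epsilon) = \bigvee I$, an instance of Lemma~\ref{lem:join-of-ji}). Your stronger reading is reasonable, but the converse step of your reachability claim does not follow from the ingredients you cite. You write that ``by join-closedness applied to $sa$ there is some $s_0 \in S$ with $\row(s_0) \leq \row(s)$ witnessing $r' \leq \row(s_0 a)$''. Join-closedness says only that $\row(sa)$ is the join of the $q \in Q$ below it; it says nothing about producing an $s_0$ whose row lies below $\row(s)$ \emph{and} whose $a$-extension dominates $r'$. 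When $\row(s) \in Q$ the choice $s_0 = s$ works trivially, but when $\row(s) \notin Q$ you would need something like $\row(sa) = \bigvee_{q \in U_s} \row(s_q a)$, and join-consistency only gives one inequality. The obstruction is that this last equality requires, for $e \in \row(sa)$, that $ae \in E$ so one can push the membership back through $\row(s) = \bigvee U_s$; suffix-closedness of $E$ gives the opposite implication.

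In short: your plan proves strictly more than the paper attempts, your second stage is the paper's proof verbatim, and your first stage has a genuine gap in the $U_{sa} \subseteq \delta(U_s,a)$ direction that neither join-closedness nor join-consistency closes without an extra argument.
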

\begin{proof}
This closely follows the inductive proof of Theorem~\ref{thm:residual-characterisation}.
Recall that the states of the automaton are given by $q = \row(s) \subseteq E$ for certain $s$.
We prove $\stlang{\row(s)} \cap E = \row(s)$ for states $row(s)$ by induction:
\begin{align*}
\epsilon \in \stlang{\row(s)} \cap E &\,\iff\, \row(s) \in F \text{ and } \epsilon \in E \,\iff\, \epsilon \in \row(s) \wedge \epsilon \in E \\
au \in \stlang{\row(s)} \cap E
&\,\iff\, u \in \stlang{\delta(\row(s), a)} \\
&\,\iff\, \exists s' \text{ with } \row(s') \leq \row(sa) \text{ and } u \in \stlang{\row(s')} \cap E \\
&\,\iff\, \exists s' \text{ with } \row(s') \leq \row(sa) \text{ and } u \in \row(s') \text{ and } u \in E \\
&\,\iff\, u \in \bigvee \delta(\row(s), a) \text{ and } u \in E \\
&\,\iff\, au \in \row(s) \text{ and } u \in E
\end{align*}
Note that we need $E$ to be suffix-closed and that the empty word is in $E$.
\end{proof}

We can now prove the main theorem of this section.

\begin{proof}[Proof of Theorem~\ref{thm:learning-residual}]
Each time a counterexample is added, the next hypothesis (which will always be constructed per Lemma~\ref{lem:closed-consistent}) will be different (Lemma~\ref{lem:agree}).
But this can only happen if a column or row is added, which we only need to do finitely many times (Lemmas~\ref{lem:closed} and~\ref{lem:consistent}).
To be precise, this happens at most
\[ l + |\Sigma^{\leq l}|^2 \cdot p(dl) \]
times, where $p$ is from the proof of Lemma~\ref{lem:consistent} and $d$ is the atom-dimension of $\Sigma$ and $l$ is the least such that $\Sigma^{\leq l}$ contains a characterising word for each state of the canonical residual automaton.
(Put differently: consider the shortest characterising words for all states, then $l$ is the length of the longest of these.)
So we conclude that the algorithm terminates after finitely many equivalence queries and only need finitely many membership queries since the table $(S \cup S \Sigma) \times E$ is orbit-finite.
\end{proof}

Unfortunately, considering all words bounded by a certain length requires many membership queries.
In fact, characterising words can be exponential in length~\cite{DenisLT02}, meaning that this algorithm may need doubly exponentially many membership queries.%
\footnote{The reader should not interpret this as a complexity upper bound. In fact, no upper bound is known on the length of characterising words.}

\begin{remark}
Note that our termination argument is not concerned with the implementation of the teacher.
This is standard for Angluin-style algorithms, which assume that the teacher is always able to provide correct answers to queries.
As mentioned, this assumption is often too strong, and in our setting a direct equivalence check is not available due to Proposition~\ref{prop:decidability-universality}.
In practice, however, it is common to use testing techniques~\cite{Vaandrager17}.
\end{remark}

\section{Discussion}

\subsection{Conclusion}

In this paper we have investigated a subclass of nondeterministic automata over infinite alphabets.
This class naturally arises in the context of query learning, where automata have to be constructed from finitely many observations.
Although there are many classes of data languages, we have shown that our class of residual languages admits canonical automata.
The states of these automata correspond to join-irreducible elements.

In the context of learning, we show that convergence of standard Angluin-style algorithms is not guaranteed, even for residual languages.
We propose a modified algorithm which guarantees convergence at the expense of an increase in the number of observations.

We emphasise that, unlike other algorithms based on residuality such as \nlstar~\cite{BolligHKL09} and \alstar~\cite{AngluinEF15}, our algorithm does not depend on the size, or even the existence, of the minimal deterministic automaton for the target language.
This is a crucial difference, since dependence on the minimal deterministic automaton hinders generalisation to nondeterministic nominal automata, which are strictly more expressive.
Ideally, in the residual case, one would like to have an efficient algorithm for which the complexity depends only on the length of characterising words, which is an intrinsic feature of residual automata.
To the best of our knowledge, no such algorithm exists in the finite setting.

Finally, another interesting open question is whether all nondeterministic automata can be \emph{efficiently} learned.
We note that nondeterministic automata can be enumerated, and hence can be learned via equivalence queries only.
This would result in a highly inefficient algorithm.
This parallels the current understanding of learning probabilistic languages. Although efficient (learning in the limit) learning algorithms for deterministic and residual languages exist~\cite{DenisE04}, the general case is still open.

\subsection{Related Work} We last present some related work.
\paragraph*{Lattices and Category Theory}
In~\cite{GabbayG17, GabbayLP11} aspects of nominal lattices and Boolean algebras are investigated.
To the best of our knowledge, our results of nominal lattice theory, especially the algorithmic properties (of join-irreducible elements), are new.

Residual automata over finite alphabets have categorical characterisation~\cite{MyersAMU15} in terms of \emph{closure spaces}.
We see no obstructions in generalising those results to nominal sets.
This would amount to finding the right notion of nominal (complete) join-semilattice, with either finitely or uniformly supported joins, and the notion of nominal closure spaces.

\paragraph*{Other Data Languages}
Related data languages are the nominal languages with an explicit notion of binding~\cite{GabbayC11, KozenMP015, KurzST12, SchroderKMW17}.
Although these are sub-classes of the nominal languages we consider, binding is an important construct, e.g., to represent resource-allocation.
Availability of a notion of derivatives~\cite{KozenMP015} suggests that residuality may prove beneficial for learning these languages.

Another related type of automaton is that of session automata~\cite{BolligHLM14}.
They differ in that they deal with global freshness as opposed to local freshness.
They form a robust class of languages and their learnability is discussed in \emph{loc.\ cit.}

\ch{Nominal} automata can be defined parametrically in the data domain (see Section~\ref{sec:symmetries}, but also~\cite{Bojanczyk19}).
For the mathematical foundation of nominal sets (or sets with atoms), the only difference will be that of the group of symmetries.
However, the current proof of Lemma~\ref{lem:join-of-ji} only works when the permutations are finite.
This excludes the ordered atoms with monotone bijections as permutations.
We do not know whether the main theorem still holds in the general case.

\paragraph*{Alternating Automata}
One could try to generalise \alstar{} from~\cite{AngluinEF15} to \emph{alternating \ch{nominal} automata.}
Beside the join, these automata can also use a meet and so the algebraic structure will be that of a distributive (nominal) lattice.

We think that the analogue to Theorem~\ref{thm:residual-characterisation} will hold: a language is accepted by a residual alternating nominal automaton iff there is a orbit-finite set of generators w.r.t.\ both the join and meet.
However, there is no analogue to the join-irreducibles: there can be many different sets of generators.
This might be an obstacle to generalise \alstar{} as the algorithm will need to find a set of generating rows, and we are unsure whether this is even decidable.

\paragraph*{Unambiguous Automata}
Of special interest is the subclass of \emph{unambiguous automata} which enjoy many recent breakthroughs~\cite{BarloyC21, Colcombet15, MottetQ19}.
We note that residual languages are orthogonal to unambiguous languages.
For instance, the language $\lang_\up{n}$ is unambiguous but not residual, whereas $\lang_\up{ng,r}$ is residual but ambiguous.
Moreover, their intersection has neither property, and every deterministic language has both properties.
One interesting fact is that if a canonical residual automaton is unambiguous, then the join-irreducibles form an anti-chain.

The unambiguous automata can be embedded into \emph{weighted nominal automata}~\cite{BojanczykKM21}.
This embedding allows one to use linear algebra for these automata and
we expect that the learning algorithm for weighted automata~\cite{BalleM15, BergadanoV96} generalise to the setting of nominal automata.
This could follow from the algebraic learning results of~\cite{UrbatS20}, since the length of the nominal vector spaces is finite~\cite{BojanczykKM21}.
The learning algorithm can therefore construct a minimal weighted nominal automaton, but this might not be an actual unambiguous (nondeterministic) automaton.

\section*{Acknowledgements}

We would like to thank Gerco van Heerdt for providing examples similar to that of $\lang_\up{r}$ in the context of probabilistic automata. We thank Borja Balle for references on residual probabilistic languages, and Henning Urbat for discussions on nominal lattice theory. We thank Thorsten Wi{\ss}mann for his detailed comments that led us to simplify and improve several proofs, and the overall presentation. We thank reviewers for their interesting questions and suggestions.

\bibliographystyle{alphaurl}
\bibliography{references}

\end{document}